\def\sn{\sqrt{n}}
\newcommand{\reels}{\mathbb{R}}
\newcommand{\esp}{\mathbb{E}}
\newcommand{\p}{\mathbb{P}}
\newcommand{\procX}{(X(t))_{t \geq 0}}
\newcommand{\mean}{\mu}
\newcommand{\Mb}{M_{b}}
\newcommand{\Msigma}{M_{\tilde{\sigma}}}
\newcommand{\lambdaud}{\lambda_{12}}
\newcommand{\lambdadu}{\lambda_{21}}
\newcommand{\lambdauu}{\lambda_{11}}
\newcommand{\lambdadd}{\lambda_{22}}
\newcommand{\Em}{E}
\newcommand{\tE}{\mathbb{R}_{+}\times\Em}
\newtheorem{thm}{Theorem}
\newtheorem{lem}[thm]{Lemma}
\newtheorem{prop}[thm]{Proposition}
\newtheorem{RM}[thm]{Remark}
\newtheorem{ex}[thm]{Example}
\newtheorem{assump}[thm]{Condition}
\begin{document}

\title[A simple discretization scheme]{A simple discretization scheme for nonnegative diffusion processes, with applications to option pricing}
\maketitle

\vspace{1cm}

\begin{center}
{\sc Chantal Labb\'e}\footnote{corresponding author}\\
Service de l'enseignement des m\'ethodes quantitatives de gestion\\
HEC Montr\'eal\\
3000 chemin de la C\^{o}te-Sainte-Catherine\\
Montr\'eal, Qu\'ebec, Canada\\
H3T 2A7\\
Phone number: 514-340-6723\\
E-mail: chantal.labbe@hec.ca\\

\vspace{1cm}
{\sc Bruno R\'emillard}\\
Service de l'enseignement des m\'ethodes quantitatives de gestion\\
HEC Montr\'eal\\
3000 chemin de la C\^{o}te-Sainte-Catherine\\
Montr\'eal, Qu\'ebec, Canada\\
H3T 2A7\\
Phone number: 514-340-6794\\
E-mail: bruno.remillard@hec.ca\\

\vspace{1cm}
{\sc Jean-Fran\c{c}ois Renaud}\\
D\'epartement de math\'ematiques\\
Universit\'e du Qu\'ebec \`a Montr\'eal (UQAM)\\
201 av.\ Pr\'esident-Kennedy\\
Montr\'eal, Qu\'ebec, Canada\\
H2X 3Y7\\
Phone number: 514-987-3000 ext.\ 7675\\
E-mail: renaud.jf@uqam.ca\\

\vspace{1cm}
\today
\end{center}

\thispagestyle{empty}
\newpage

\begin{center}
\begin{minipage}{14cm}
{\sc Abstract}. A discretization scheme for nonnegative diffusion processes is proposed and the convergence of the corresponding sequence of approximate processes is proved using the martingale problem framework. Motivations for this scheme come typically from finance, especially for path-dependent option pricing. The scheme is simple: one only needs to find a nonnegative distribution whose mean and variance satisfy a simple condition to apply it. Then, for virtually any (path-dependent) payoff, Monte Carlo option prices obtained from this scheme will converge to the theoretical price. Examples of models and diffusion processes for which the scheme applies are provided.
\end{minipage}
\end{center}

\vspace{0.5cm}

\noindent {\sc AMS Subject Classifications}: 60J35, 65C30, 60H35, 91B24.\\

\noindent{\sc Keywords}: Euler discretization schemes, nonnegativity preservation, diffusion processes, Markov chains, martingale problem,  convergence in distribution, interest rate models, stochastic volatility models, path-dependent options.

\section{Introduction}

The Cox-Ingersoll-Ross (CIR) process, also known as the mean-reverting square-root diffusion, was introduced by \cite{coxetal1985} for interest rates modeling. It now has other financial applications, for example in Heston's stochastic volatility model \citep{heston1993}, where it plays the role of the squared volatility. This process is the solution to the following stochastic differential equation:
\begin{equation}\label{E:cir}
dX(t) = \kappa (\beta - X(t)) \, dt \, + \nu \sqrt{X(t)} \, dW(t) , \quad X(0) = x_0 ,
\end{equation}
where $(W(t))_{t \ge 0}$ is a one-dimensional standard Brownian motion, $\kappa$, $\beta$ and $\nu$  are strictly positive constants, and the initial value satisfies $x_0\geq0$. It is known that this process stays nonnegative.

If one wants to construct a discrete-time approximation of~$(X(t))_{t\geq0}$ defined by~\eqref{E:cir}, 
a standard Euler-Maruyama scheme cannot be applied directly. Indeed, for a time-step of size $1/n$, the approximating process $(Y_n(k))_{k\geq0}$ would be given by $Y_n(0)=x_0$, and
\begin{equation}\label{E:CIR schema d'Euler}
Y_n(k+1) = Y_n(k) + \frac{\kappa}{n}(\beta - Y_n(k)) + \frac{\nu}{\sn} \sqrt{Y_n(k)}\; Z_k^{(n)},
\end{equation}
for $k=0,1,2,\dots$, where the $Z_k^{(n)}$'s are random variables with the standard normal distribution. 
If $Y_n(k)$ is nonnegative for a given~$k$, then~\eqref{E:CIR schema d'Euler} correctly defines $Y_n(k+1)$, whose value then has a non-zero probability of being negative. If $Y_n(k+1)$ actually happens to be negative, the next iteration cannot be achieved since it would involve taking its square root. Thus, one can iterate~\eqref{E:CIR schema d'Euler} only until the first $k$ for which $Y_n(k)<0$.
%
Several ways have been proposed for avoiding this problem. For example, one may simulate values of $Z_k^{(n)}$ until the right-hand side of~\eqref{E:CIR schema d'Euler} is nonnegative, and then set $Y_n(k+1)$ to be this value, but this results in a scheme for which the number of steps needed to generate a sample of a given size is random.
One could also use either of these schemes instead of~\eqref{E:CIR schema d'Euler}:
\begin{itemize}
\item[(b1)] $Y_n(k+1) = Y_n(k) + \frac{\kappa}{n}(\beta - Y_n(k)) + \frac{\nu}{\sn} \sqrt{(Y_n(k))^+} \;Z_k^{(n)}$;
\item[(b2)] $Y_n(k+1) = Y_n(k) + \frac{\kappa}{n}(\beta - (Y_n(k))^+) + \frac{\nu}{\sn} \sqrt{(Y_n(k))^+} \;Z_k^{(n)}$;
\item[(b3)] $Y_n(k+1) = \bigr|\; Y_n(k) + \frac{\kappa}{n}(\beta - Y_n(k)) + \frac{\nu}{\sn} \sqrt{Y_n(k)} \;Z_k^{(n)}\; \bigr|$;
\item[(b4)] $Y_n(k+1) = Y_n(k) + \frac{\kappa}{n}(\beta - Y_n(k)) + \frac{\nu}{\sn} \sqrt{\,|Y_n(k)|\,} \;Z_k^{(n)}$,
\end{itemize}
where $x^+:=\max(x,0)$, $x\in\reels$.
All these schemes are well defined, but (b1), (b2) and (b4) generate processes $(Y_n(k))_{k\geq0}$ whose values are not necessarily nonnegative. In quantitative finance, this is often perceived as a drawback when approximating positive quantities, such as interest rates, stock prices and volatilities. For more details on Euler-Maruyama discretization schemes for diffusions, the reader is referred to \cite{glasserman2004} and \cite{kloedenplaten1992}.

A natural yet crucial question when using a discretization scheme is: does it converge to the process we wish to approximate as the time-step decreases to zero? Classical theory mostly deals with diffusions with Lipschitz coefficients, excluding the CIR process (whose diffusion coefficient is not Lipschitz).  \cite{deelstradelbaen1998} establish the strong convergence of the scheme (b1), in a framework where the mean reversion parameter~$\beta$ may be a stochastic process. More recently, \cite{bossydiop2007} and \cite{berkaouietal2008} studied the weak and strong 
 convergence of the scheme (b3), under the more general setting where the diffusion coefficient in~\eqref{E:cir} is replaced with $\nu (X(t))^\alpha$, for some $\alpha\in[1/2,1)$. Note that by letting $\alpha=1/2$, one retrieves the CIR process. \cite{highammao2005} study 
strong convergence in the case of (b4). \cite{lordetal2010} introduce (b2), a modification of (b1), discuss its strong convergence, and present an overview of several discretization schemes, including (b1)-(b4) and implicit ones, and present numerical comparisons. Similarly, \cite{alfonsi2005} presents implicit schemes (that admit analytical solutions), studies their weak and strong convergence, and presents numerical comparisons with (b1) and (b3). Among popular implicit methods, let us mention the implicit Milstein scheme, described for instance in \cite{kahletal2008}, where it is found to be better for discretizing the CIR process than the explicit Milstein scheme or the balanced implicit method of \cite{milsteinetal1998}.

In financial engineering, one is often interested in pricing a derivative security for which the CIR process is involved in modeling the underlying asset. However, it seems like very little attention has been given to the question of convergence of approximate prices (resulting from the discretization) to the right price, especially for path-dependent derivatives.
In the present paper, we address this problem in the even more general framework where the price of the underlying asset follows a (typically nonnegative) diffusion process with time-dependent coefficients $dX(t) = b(t,X(t)) \, dt + \sigma(t,X(t)) \, dW(t)$.
The discounted payoff function of, for instance, an option is a function of the {\it path} of the underlying asset price, say $g(X)$. The price of this option is therefore given by $\esp \left[ g(X) \right]$, when the underlying probability measure is a risk-neutral measure. The goal is now to define a sequence of approximating processes $(X_n)_{n \geq 1}$, based on a discretization scheme, such that $\esp \left[ g(X_n) \right]$ converges to $\esp\left[ g(X) \right]$ as $n$ goes to infinity. Indeed, Monte Carlo estimators of $\esp \left[ g(X_n) \right]$ will then provide numerical values of the right price $\esp\left[ g(X) \right]$. Thus, existence of a weak solution $X$ to the above SDE, and convergence in distribution of a sequence of processes $(X_n)_{n \geq 1}$ to this solution is usually more than sufficient for pricing purposes. Indeed, such convergence is equivalent to $\esp \left[ g(X_n) \right]$ converging to $\esp\left[ g(X) \right]$ 
for all path functionals~$g$ within a class of sufficiently well behaved functionals. 
The discounted payoff function $g$ of an option is typically a {\it continuous} (or almost surely continuous) function of the path of the underlying asset, and this is usually sufficient to use the previous definition, 
establishing at once price convergence.
It is to be stressed that convergence in distribution of the sequence of processes $(X_n)_{n \geq 1}$ to $X$ involves the distribution of the whole path and must not be confused with convergence in distribution of the sequence of random variables $(X_n(T))_{n\geq1}$ to $X(T)$, where $T$ is the maturity time. The latter is a much weaker statement and is useful for pricing European contingent claims for which the payoff depends only on the value of the underlying security at the maturity date, but generally does not allow us to deal with path-dependent contingent claims. Weak convergence results generally found in the literature (such as in \cite{bossydiop2007} or \cite{alfonsi2005}) are of this latter type, i.e., they pertain to the processes sampled at a fixed instant $T>0$. Note that the convergence in distribution of the sequence of processes $(X_n)_{n \geq 1}$ to $X$ also includes the convergence in distribution of any sequence of random vectors $(X_n(t_1),\dots,X_n(t_k))$ to $(X(t_1),\dots,X(t_k))$ for fixed times $t_1,\dots,t_k$.

As pointed out earlier, much attention is dedicated to strong convergence in the literature. Strong convergence roughly says that the approximating process is uniformly close to~$X$ on the interval of time $[0,T]$ for large~$n$, hence it holds promises to establish price convergence for path-dependent contingent claims with maturity~$T$. \cite{highammao2005} actually take this next step when $X$ is the CIR process; they define continuous-time approximation processes from the discrete scheme (b4), prove their strong convergence towards~$X$, and then deduce convergence of the price of a few path-dependent derivative securities. Note that many papers deal with the CIR process in isolation; see, e.g., \cite{alfonsi2005}, \cite{berkaouietal2008}, and \cite{bossydiop2007}. As opposed to that, \cite{highammao2005} prove convergence of the price of a barrier option in Heston's model, in which the CIR process is used to model the squared volatility of the stock price. To the best of our knowledge, they are the first to establish, by showing convergence for certain option prices, that using an Euler-type discretization in the full Heston model is theoretically correct. Numerical results, obtained from several discretization schemes, are provided in~\cite{lordetal2010} for some options in the Heston model.
Note that the transition density function of the CIR process is known to be (within a scaling parameter) noncentral chi-square, which allows for direct simulation of this process; \cite{broadiekaya2006} have provided an exact simulation algorithm for Heston's model. However, algorithms using the transition density are computationally slower than Euler-type schemes, especially when the trajectory must be sampled at a large number of time points. Therefore, this family of algorithms is less suited for pricing highly path-dependent options; see for example the introductory discussion in~\cite{highammao2005}. For this reason, and the fact that direct discretization methods are widely used in practice, our focus is on the latter.

So, \cite{highammao2005} show that strong convergence of the approximating processes to~$X$ may effectively be used to deduce price convergence for  certain contingent claims. Such a deduction involves somewhat delicate calculus of probability, the complexity of which depends on the complexity of the specific contingent claim considered.
As previously discussed, one could instead easily deduce price convergence by essentially verifying that the payoff function is continuous, provided that one had a sequence of approximation processes $(X_n)_{n \geq 1}$ weakly converging towards~$X$, in the sense of convergence in distribution of the whole path. It is possible to implement this other approach, with relative ease and in general setups, by using a powerful idea set forth by Stroock and Varadhan (discussed in detail in~\cite{stroockvaradhan1979}), namely the characterization of Markov processes by means of the so-called {\it martingale problem}. In particular, Stroock and Varadhan's approach provides an alternative way to regard diffusions. This point of view has the advantage of being particularly well suited to establish convergence of Markov chains to diffusion processes. Actually, a martingale problem is entirely defined by the expression of a {\it generator}, and it turns out that it is sufficient to establish convergence of the generators of a sequence of Markov chains to the {\it infinitesimal generator} of a diffusion, for this sequence of Markov chains to converge in distribution to the diffusion.

In this paper, our main goal is to propose a discretization scheme, in the form of a Markov chain with nonnegative values, and use the above-mentioned techniques based on the generator to show its convergence in distribution to the solution of the SDE, under suitable assumptions on~$b$ and~$\sigma$. Note that in a standard Euler-Maruyama scheme, such as the one presented in~\eqref{E:CIR schema d'Euler} for the CIR process, it is the use of the normal distribution which is responsible for the non-zero probability of getting a negative value on the next time-step, even if the current value is nonnegative. When we work within the framework of the martingale problem, there is no additional difficulty in establishing convergence if we trade the normal distribution for another distribution whose second moment is finite, in the spirit of the \textit{weak Euler scheme} (see, e.g., \cite{kloedenplaten1992}). This suggests the following idea: let us use a scheme very similar to the standard Euler-Maruyama scheme, where by a careful choice of distribution we make sure that the resulting Markov chain is well defined and assumes only nonnegative values. We propose to use a nonnegative distribution, and we give conditions on its mean and variance to ensure that the scheme is well defined and converges. Thus, application of our scheme in practice reduces to the sole choice of a nonnegative distribution whose mean and variance satisfy a given condition, making it relatively simple and versatile. The family of diffusions for which such a choice of distribution is possible encompasses several examples of practical interest. For instance, our scheme applies to the CIR process, including some cases where the reversion parameter is a (possibly correlated) stochastic process (as in~\cite{deelstradelbaen1998}), and it can be used in the framework of Heston's model. Moreover, verifying that the approximate prices of a path-dependent option converge to the real price is then a matter of verifying that the payoff functional is continuous (at least on a set of probability~1), and hence one does not have to resort to delicate calculus of probability; note that it is left for future work to analytically evaluate the rate of weak convergence of this scheme. Finally, it is interesting to note that the convergence of the binomial tree towards the geometric Brownian motion, and that of the GARCH(1,1) discrete process to the continuous version of this process, are special cases of our scheme.

The paper is organised as follows. In Section~\ref{sec:schema}, we introduce the nonnegativity preserving discretization scheme, and specify for which diffusions it applies. In Section~\ref{sec:convergence}, we present the main result (Theorem~\ref{T:main}), which says that the approximate processes defined by the scheme converge in distribution to the right diffusion process. A brief introduction to the martingale problem is provided. In Section~\ref{sec:implications}, we discuss the consequences of the convergence in distribution established in the main result. In particular, we see the ease with which it can be used in the pricing of derivative securities. Some examples of dynamics for which the scheme applies (and converges) are given in Section~\ref{sec:examples}. Finally, the results of two numerical experiments involving the proposed scheme as well as schemes (b1)-(b4) are presented in Section~\ref{sec:numerical}.

\section{The scheme}\label{sec:schema}

Consider once again the stochastic differential equation (SDE)
\begin{equation}\label{E:mainsde}
dX(t) = b(t,X(t)) \, dt + \sigma(t,X(t)) \, dW(t) ,\; t\geq0,\quad X(0)=x_0,
\end{equation}
where the following is satisfied:
\begin{assump}\label{A:cond0}
The coefficients $b:\reels_+\times\reels^d\rightarrow\reels^d$ and $\sigma:\reels_+
\times\reels^d\rightarrow\reels^d\otimes\reels^d$ are continuous functions. Moreover, $x_0$ is a constant in~$\reels^d$. Here, $d$ is a positive integer, $\reels_+:=[0,\infty)$, and $\reels^d\otimes\reels^d$ is the space of $d\times d$-matrices.
\end{assump}
The process $(W(t))_{t\geq0}$ represents a $d$-dimensional standard Brownian motion. The main reason to allow equation~\eqref{E:mainsde} to be multidimensional is to set a framework general enough to include as special cases, for instance, two-factor interest rate models or stochastic volatility models, where at least two SDEs are involved simultaneously. We are primarily interested in cases where the solution to~\eqref{E:mainsde} is a $\reels^d$-valued process, at least one of whose components is a nonnegative process. More precisely, we postulate the following:
\begin{assump}\label{A:cond1}
There is an integer~$m$, with $0\leq m\leq d$, such that for each $x_0\in\Em$, where $\Em:=\reels_+^m\times\reels^{d-m}$, the SDE~\eqref{E:mainsde} has a unique (in the sense of probability law) weak solution, that is there exists a probability space $(\Omega,\mathcal{F},\p)$, a filtration $\{\mathcal{F}_t\}_{t\geq0}$ satisfying the usual conditions, and a pair $(W,X)$, where $W$ is a $d$-dimensional standard Brownian motion and $X$ is a process with continuous paths satisfying~\eqref{E:mainsde}. Moreover, the unique weak solution is such that $X(t)\in\Em$ for all $t\geq0$, almost surely.
\end{assump}

\begin{RM}
Note that only the values of the  functions $b$ and $\sigma$ over $\tE$ are relevant under this setting.
\end{RM}

In order to fix ideas, let us first focus on the case where $x_0\geq0$ is given, and $\procX$ is a nonnegative one-dimensional process (i.e., $d=m=1$ and $E=\reels_+$). Moreover, assume that $\sigma(\cdot)$ is a nonnegative (scalar) function. If we fix a discrete time-step of size $1/n$, for some integer~$n$, then a discrete-time approximating process $(Y_n(k))_{k\geq0}$ is defined as follows: let $Y_n(0)=x_0$ and, for each $k\geq0$, set
\begin{equation}\label{eq:schema une dim}
Y_n(k+1)=Y_n(k)+\frac1n b(k/n,Y_n(k))+\frac{1}{\sn}\sigma(k/n,Y_n(k))(\varepsilon_{k}^{(n)}-\mean),
\end{equation}
where $(\varepsilon_k^{(n)})_{k\geq0}$ is a family of independent copies of a random variable~$\varepsilon$ with mean~$\mean$ and variance~1. In a standard Euler scheme, $\varepsilon-\mean$ follows the standard normal distribution, causing $Y_n(k+1)$, even given $Y_n(k)\geq0$, to have a non-zero probability of being negative. To avoid this problem, let us instead assume that $\varepsilon$ is a nonnegative random variable. If its (positive) mean~$\mean$ is set such that
\begin{equation}\label{eq:condition pour positivite, 1 dim}
x+\frac1n b(t,x) - \frac1{\sn}\sigma(t,x)\,\mean\geq0,\quad\text{for all }(t,x)\in\reels_+\times\reels_+,
\end{equation}
then $(Y_n(k))_{k\geq0}$ is clearly a Markov chain whose values are nonnegative. We will see later that letting go of the normality is not too much of a price to pay, as $(Y_n(k))_{k\geq0}$ provides a valid approximation (see Theorem~\ref{T:main}). We also delay the illustration of the above simple scheme, in the case of the CIR process (in Example~\ref{ex:exemple du CIR une dim}), until after its generalization to the multidimensional case, that we now undertake.

Assume the general setting of Conditions~\ref{A:cond0} and~\ref{A:cond1}. Put $a:=\sigma\sigma^\top$ ($\top$ indicates the transpose operation). The multidimensional discretization scheme relies on a function $\tilde{\sigma}$, a sequence of independent copies of a random vector~$\varepsilon$ on a probability space $(\tilde{\Omega},\tilde{\mathcal{F}},\tilde{\p})$, and an integer~${n_0}$, chosen to satisfy:
\begin{assump}\label{A:cond2}
We have $a(t,x)=\tilde{\sigma}(t,x)\Sigma\,\tilde{\sigma}^\top(t,x)$ for all $(t,x)\in\tE$, where~$\Sigma$ is a symmetric semi-definite positive $d\times d$ matrix, and $\tilde{\sigma}:\tE\rightarrow\reels^d\otimes\reels^d$ is a continuous function. The positive integer ${n_0}$ and the mean $\mean$ are such that
\begin{equation}\label{eq:condition de nonnegativite}
\inf_{(t,x)\in\tE} \left( x + \frac{1}{n} b\left(t,x\right) - \frac{1}{\sn} \tilde{\sigma}\left(t,x\right)\mean \right) \in\Em,\quad\text{for all }n\geq {n_0},
\end{equation}
where the infimum is taken componentwise. Moreover, the random vector~$\varepsilon$ has mean $\mu$, covariance matrix~$\Sigma$, and
\begin{equation}\label{eq:nonnegativite des epsilon}
\tilde{\p}(\tilde{\sigma}(t,x)\varepsilon\in\Em\text{ for all }(t,x)\in\tE)=1\,.
\end{equation}
\end{assump}
\begin{RM}
Equation~\eqref{eq:condition de nonnegativite} is a direct extension of condition~\eqref{eq:condition pour positivite, 1 dim}. The condition in equation~\eqref{eq:nonnegativite des epsilon} is satisfied, for instance, if all components of $\tilde{\sigma}$ are nonnegative functions, and all components of $\varepsilon$ are nonnegative random variables. When $d=1$ and $\Em=\reels_+$, and in the typical situation where the scalar function~$\sigma$ is nonnegative, we usually use $\tilde{\sigma}=\sigma$ and $\Sigma=1$. In this case, as mentioned previously, upon setting a nonnegative distribution for~$\varepsilon$ with variance $\Sigma=1$ and mean~$\mean>0$ satisfying~\eqref{eq:condition de nonnegativite} (or equivalently~\eqref{eq:condition pour positivite, 1 dim}), then  Condition~\ref{A:cond2} is satisfied.
\end{RM}
\begin{RM}
Note that $\mean$ and ${n_0}$ are typically chosen as a couple in order to satisfy the condition given in equation~\eqref{eq:condition de nonnegativite}, providing more flexibility in the choice of the distribution of~$\varepsilon$.
\end{RM}
Let $(\varepsilon_k^{(n)})_{k\geq0,\,n\geq {n_0}}$ be a family of independent copies of the random vector~$\varepsilon$ of Condition~\ref{A:cond2} on the probability space $(\tilde{\Omega},\tilde{\mathcal{F}},\tilde{\p})$. Fix some $x_0\in\Em$. For $n\geq {n_0}$, define $(Y_n(k))_{k\geq0}$ by letting $Y_n(0)=x_0$, and then by iterating as follows:
\begin{equation}\label{E:algo}
Y_n(k+1) = Y_n(k) + \frac{1}{n} b\left(k/n, Y_n(k)\right) + \frac{1}{\sn} \tilde{\sigma} \left(k/n, Y_n(k) \right) (\varepsilon_{k}^{(n)} - \mean),
\quad k\in\{0,1,2,\dots\}.
\end{equation}
For each $n\geq {n_0}$, the process $(Y_n(k))_{k\geq0}$ is a Markov chain with values in~$\Em$, i.e., $\tilde{\p}(Y_n(k)\in\Em)=1$ for all $k\geq0$, in view of Condition~\ref{A:cond2} and the fact that $x_0\in\Em$. Finally, define the time-continuous approximating process $(X_n(t))_{t\geq0}$ by linear interpolation of the values of $(Y_n(k))_{k\geq0}$ between the discrete time-steps:
\begin{equation}\label{eq:linear interpolation}
X_n(t) = Y_n (\lfloor nt \rfloor) + (nt - \lfloor nt \rfloor)(Y_n (\lfloor nt \rfloor+1)-Y_n (\lfloor nt \rfloor)),\quad t\geq0,
\end{equation}
where $\lfloor y \rfloor$ is the largest integer less than or equal to $y$. By construction, $(X_n(t))_{t\geq0}$ is an $\Em$-valued process for each $n\geq {n_0}$.
Our main result, discussed in Section~\ref{sec:convergence}, states that the sequence of approximating processes $(X_n)_{n\geq {n_0}}$ converges to the (unique) solution of the SDE~\eqref{E:mainsde}. Section~\ref{sec:implications} discusses further the nature of this convergence and its applications in option pricing.

We conclude this section by giving an example of dynamics for which Conditions~\ref{A:cond0}, \ref{A:cond1}, and~\ref{A:cond2} are satisfied, and hence the scheme is applicable. Section~\ref{sec:examples} gives more such examples of practical interest.

\begin{ex}\label{ex:exemple du CIR une dim}{\bf Cox-Ingersoll-Ross (CIR) model} 
The CIR process is defined by equation~\eqref{E:mainsde} with $b(t,x) = \kappa (\beta - x)$ and $\sigma(t,x) = \nu \sqrt{x}$, $x\geq0$, where $\kappa$, $\beta$ and $\nu$ are positive constants. Here, $d=m=1$ and $\Em=\reels_+$, and we trivially set $\tilde{\sigma}=\sigma$ and $\Sigma=1$.  With straightforward optimization computations, one can show that whenever $n > \kappa$ and $\mean>0$ we have
\begin{equation}\label{eq:inf pour le CIR}
\inf_{x \ge 0} \left( x + \frac{1}{n} \kappa (\beta - x) - \frac{\mean}{\sn} \nu \sqrt{x} \right) = \frac{\kappa \beta}{n} - \frac{\mean^2 \nu^2}{4(n-\kappa)}.
\end{equation}
Consequently, if we choose ${n_0}>\kappa$, and $\mean$ such that the right-hand side of~\eqref{eq:inf pour le CIR} is nonnegative, which is true if $0<\mean\leq\frac2\nu\sqrt{\kappa\beta\left(1-\frac{\kappa}{{n_0}}\right)}$, and finally set a nonnegative distribution with mean~$\mean$ and variance~1 for~$\varepsilon$, then Condition~\ref{A:cond2} is satisfied.
It is also interesting to note that~\eqref{E:algo} implies $\tilde{\esp}[Y_n(k+1)]=\kappa\beta/n+\tilde{\esp}[Y_n(k)](1-\kappa/n)$, $k=0,1,2,\dots$, from which we deduce $\tilde{\esp}[Y_n(k)]=\beta+(x_0-\beta)(1-\kappa/n)^k$. Using~\eqref{eq:linear interpolation}, one may easily conclude that, for a fixed time $T>0$, $\lim_{n\to\infty}\tilde{\esp}[X_n(T)]=\beta+(x_0-\beta)e^{-\kappa T}=\esp[X(T)]$, where $(\Omega,\mathcal{F},\p)$, $\{\mathcal{F}_t\}_{t\geq0}$, $(W,X)$ is the weak solution of~\eqref{E:mainsde}; see \citep[Section 4.4]{shreve2004b} for the last equality. Very similarly, one may establish that the variance of $X_n(T)$ goes to that of $X(T)$ as $n$ goes to infinity.
\end{ex}

\section{Convergence of the scheme using the martingale problem formulation}\label{sec:convergence}

In this section, we establish convergence of the probability law of the processes $X_n$, $n\geq {n_0}$ (defined in the previous section), towards the law of the solution to the SDE~\eqref{E:mainsde}, as $n$ goes to infinity (or the time-step goes to zero). We achieve this by means of the martingale problem of Stroock and Varadhan. We provide a very brief introduction to the martingale problem in this section, mainly establishing the notation, and refer any reader seeking for a detailed discussion to \cite{ethierkurtz1986} or \cite{stroockvaradhan1979}.

Let $C_c^\infty(\reels^d)$ be the set of infinitely differentiable functions $f:\reels^d\rightarrow\reels$ with compact support. Define the differential operator
$A$, acting on functions $f\in C_c^\infty(\reels^d)$, by
\begin{equation}\label{E:gen}
(A f)(t,x) := \sum_{i=1}^d b_i(t,x) \partial_{x_i} f(x) + \frac{1}{2} \sum_{i=1}^d \sum_{j=1}^d a_{ij}(t,x) \partial_{x_i} \partial_{x_j} f(x),
\end{equation}
$(t,x)\in\reels_+\times\reels^d$, where $\partial_{x_i}$ stands for the partial derivative with respect to the $i$-th variable, and $b_i$ and $a_{ij}$ denote the entries of~$b$ and~$a$.
If $(\Omega,\mathcal{F},\p)$, $\{\mathcal{F}_t\}_{t\geq0}$, $(W,X)$ is a weak solution to the SDE~\eqref{E:mainsde}, then  $\p(X(0)=x_0)=1$, and from It\^o's formula one easily deduces that
\begin{equation}\label{eq:martingale}
f(X(t))-\int_0^t A f(s,X(s)) ds
\end{equation}
is an $\{\mathcal{F}_t\}$-martingale for any $f\in C_c^\infty(\reels^d)$.
Actually, any process $X$ with continuous paths, defined on a probability space $(\Omega,\mathcal{F},\p)$ endowed with a filtration $\{\mathcal{F}_t\}_{t\geq0}$, satisfying $\p(X(0)=x_0)=1$ and such that~\eqref{eq:martingale} is an $\{\mathcal{F}_t\}$-martingale for all $f\in C_c^\infty(\reels^d)$, is called a solution to the martingale problem for $(A,x_0)$.
Hence, any weak solution to the SDE~\eqref{E:mainsde} solves the martingale problem for $(A,x_0)$. Under Condition~\ref{A:cond0}, the converse turns out to be true. In fact, if the martingale problem for $(A,x_0)$ has a solution, then there exists a weak solution to the SDE~\eqref{E:mainsde} (see Corollary 5.3.4 in \cite{ethierkurtz1986}).
Moreover, uniqueness (in the sense of probability law) holds for solutions of the SDE~\eqref{E:mainsde} if and only if uniqueness holds for the martingale problem for $(A,x_0)$ (see again Corollary 5.3.4 in \cite{ethierkurtz1986}).
In other words, existence and uniqueness of a weak solution to the SDE~\eqref{E:mainsde} is equivalent to existence and uniqueness of a solution to the martingale problem for $(A,x_0)$. This means that all the information which is truly essential relatively to the SDE~\eqref{E:mainsde} is encapsulated in the martingale problem formulation, which in turn relies only on the expression of~$A$, also called the infinitesimal generator.

In view of the importance of the martingale problem and the generator, let us expose similar ideas in discrete time, that is, let us see how the Markov chain $(Y_n(k))_{k\geq0}$ defined in~\eqref{E:algo} may be characterized through a martingale problem as well. Consider the transition function
\begin{equation}\label{eq:transition function}
K_n(t,x,\Gamma):= \tilde{\p}\left( x+\frac1n b(t,x) + \frac1{\sn}\tilde{\sigma}(t,x)(\varepsilon-\mean)\in\Gamma\right),\; \Gamma\in\mathcal{B}(\reels^d),
\end{equation}
for $(t,x)\in\reels_+\times\Em$, where $\mathcal{B}(\reels^d)$ represents the Borel sets on $\reels^d$. This function actually describes the transitions of the Markov chain $(Y_n(k))_{k\geq0}$. Indeed, let $\{\mathcal{F}_k^{Y_n}\}_{k\geq0}$ be the filtration generated by $(Y_n(k))_{k\geq0}$ (i.e. $\mathcal{F}_k^{Y_n}:=\sigma\{Y_n(l)|l=0,1,\dots,k\}$), and note that
\begin{equation}\label{eq:transitions de Ynk}
\tilde{\p}(Y_n(k+1)\in\Gamma\,|\,\mathcal{F}_k^{Y_n})=K_n(k/n,Y_n(k),\Gamma),\quad\Gamma\in\mathcal{B}(\reels^d),
\end{equation}
for $k\in\{0,1,\dots\}$.
For all $f\in C_c^\infty(\reels^d)$, define
\begin{equation}\label{eq:generateur discret}
A_n f(t,x) := n\int(f(y)-f(x))K_n(t,x,dy),
\end{equation}
$(t,x)\in\reels_+\times\Em$. For each $f\in C_c^\infty(\reels^d)$, it is easily seen that 
\begin{equation}\label{eq:martingale temps discret}
f(Y_n(k))-\frac1n\sum_{l=0}^{k-1} A_nf(l/n,Y_n(l)),
\end{equation}
is a $\{\mathcal{F}_k^{Y_n}\}$-martingale. Conversely, any discrete-time process $(\tilde{Y}_n(k))_{k\geq0}$ such that $\tilde{Y}_n(k)=x_0$ and \eqref{eq:martingale temps discret} (with $\tilde{Y}_n$ in place of~$Y_n$) is a martingale for all $f\in C_c^\infty(\reels^d)$ (with respect to the filtration generated by $(\tilde{Y}_n(k))_{k\geq0}$) is a Markov chain whose transitions are described by~$K_n$
\citep[Section 11.2]{stroockvaradhan1979}. Hence, the martingale problem defined by~\eqref{eq:generateur discret} and~\eqref{eq:martingale temps discret} characterizes such Markov chains, and $A_n$ in~\eqref{eq:generateur discret} is the discrete analogue of the infinitesimal generator~$A$.

In order to show convergence of the sequence of processes~$(X_n)_{n\geq {n_0}}$ (recall~\eqref{E:algo} and~\eqref{eq:linear interpolation}) to the weak solution of the SDE~\eqref{E:mainsde} as $n$ goes to infinity, it turns out that it is sufficient to show convergence of the sequence of generators, more precisely that $A_nf$ converges to $Af$ as $n$ goes to infinity, for all $f\in C_c^\infty(\reels^d)$. It is this method that we use to establish our main result:
\begin{thm}\label{T:main}
Under Conditions~\ref{A:cond0}, \ref{A:cond1} and~\ref{A:cond2}, the sequence of approximating processes~$(X_n)_{n\geq {n_0}}$ defined by~\eqref{E:algo} and~\eqref{eq:linear interpolation} converges in distribution to the weak solution of SDE~\eqref{E:mainsde} or, equivalently, to the solution of the martingale problem for $(A,x_0)$.
\end{thm}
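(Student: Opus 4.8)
The plan is to invoke the standard Stroock--Varadhan machinery for convergence of Markov chains to diffusions, as formalized in \cite{stroockvaradhan1979} and \cite{ethierkurtz1986}. The strategy has three ingredients: (i) verify convergence of the discrete generators $A_n f \to A f$ for every $f \in C_c^\infty(\reels^d)$; (ii) establish tightness (relative compactness) of the sequence of laws of $(X_n)_{n \geq n_0}$; and (iii) conclude that any weak limit point solves the martingale problem for $(A,x_0)$, which by Condition~\ref{A:cond1} (existence and uniqueness of the weak solution, equivalently of the martingale problem) has a single solution, so the whole sequence converges in distribution to it.

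First I would carry out the generator computation. Fix $f \in C_c^\infty(\reels^d)$. Starting from~\eqref{eq:generateur discret} and the transition function~\eqref{eq:transition function}, the increment inside $f$ is $\frac1n b(t,x) + \frac{1}{\sn}\tilde{\sigma}(t,x)(\varepsilon - \mean)$. Taylor-expanding $f$ to second order around $x$ and taking expectation under $\tilde{\p}$, the order-$1/\sn$ term vanishes because $\esp[\varepsilon - \mean] = 0$; the surviving contributions are the drift term $\sum_i b_i(t,x)\,\partial_{x_i} f(x)$ and, using $\esp[(\varepsilon-\mean)(\varepsilon-\mean)^\top] = \Sigma$ together with the factorization $a = \tilde{\sigma}\,\Sigma\,\tilde{\sigma}^\top$ from Condition~\ref{A:cond2}, the diffusion term $\tfrac12 \sum_{i,j} a_{ij}(t,x)\,\partial_{x_i}\partial_{x_j} f(x)$. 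The remainder, controlled by the second moment of $\varepsilon$ (finite by Condition~\ref{A:cond2}) and the boundedness of the third derivatives of the compactly supported $f$, is $o(1)$ uniformly in $x$ as $n \to \infty$. Here the continuity of $b$ and $\tilde{\sigma}$ from Conditions~\ref{A:cond0} and~\ref{A:cond2} gives the requisite locally uniform convergence $A_n f(t,x) \to A f(t,x)$.

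For tightness I would appeal to the one-step moment estimates built into the scheme~\eqref{E:algo}: the conditional mean increment is $O(1/n)$ and the conditional second-moment increment is $O(1/n)$ (again using finiteness of the variance $\Sigma$ of $\varepsilon$), which are exactly the Aldous--Rebolledo-type bounds feeding the standard tightness criterion for interpolated Markov chains (see \citep[Section 11.3]{stroockvaradhan1979} or Chapter~4 of \cite{ethierkurtz1986}); the linear interpolation~\eqref{eq:linear interpolation} keeps the paths in $C([0,\infty),\reels^d)$ and does not disturb these estimates. Given tightness and generator convergence, the martingale characterization in~\eqref{eq:martingale temps discret} passes to the limit: any subsequential weak limit $X_\infty$ makes $f(X_\infty(t)) - \int_0^t A f(s,X_\infty(s))\,ds$ a martingale, i.e.\ solves the martingale problem for $(A,x_0)$, and uniqueness from Condition~\ref{A:cond1} identifies it as the weak solution of~\eqref{E:mainsde}.

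The main obstacle I anticipate is not any single computation but the care required in the limit-passage step (iii): one must verify that the discrete martingale relation survives under weak convergence, which involves checking uniform integrability of $A_n f$ along the sequence and handling the time-inhomogeneity (the generator depends on $t$ through $b(t,\cdot)$ and $\tilde{\sigma}(t,\cdot)$) so that the space-time process $(t, X_n(t))$, rather than $X_n$ alone, is what converges. The non-Lipschitz, merely continuous coefficients mean one cannot lean on pathwise estimates; everything must be routed through the generator and the uniqueness assumption of Condition~\ref{A:cond1}, which is precisely why the martingale-problem formulation is the right vehicle.
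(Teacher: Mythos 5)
Your overall architecture---generator convergence, tightness, identification of subsequential limits as solutions of the martingale problem, uniqueness from Condition~\ref{A:cond1}, and the space--time trick for the time-inhomogeneity---is the same Stroock--Varadhan route the paper takes; the paper in fact delegates your steps (ii)--(iii) wholesale to Proposition~\ref{prop:convergence d'une chaine de Markov vers une diffusion, resultat de Ethier et Kurtz}, quoted from \citep[Section~11.2]{stroockvaradhan1979} and \cite{ethierkurtz1986}. The genuine gap is in your step (i): you claim the Taylor remainder is controlled by ``the second moment of $\varepsilon$'' together with ``the boundedness of the third derivatives'' of $f$. That estimate does not close. A third-order remainder bound gives a term of size $n\,\tilde{\esp}\left[|\Delta|^3\right]$ with $\Delta=\frac1n b(t,x)+\frac1{\sn}\tilde{\sigma}(t,x)Z$, $Z=\varepsilon-\mean$, i.e.\ of order $\tilde{\esp}\left[|Z|^3\right]/\sn$; but Condition~\ref{A:cond2} assumes only that $\varepsilon$ has a covariance matrix---no third (nor $2+\delta$) moment is available, and $\tilde{\esp}|Z|^3$ cannot be bounded by $\tilde{\esp}|Z|^2$. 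This is exactly where the paper's proof takes a different technical path: rather than Taylor-expanding with a cubic remainder, it verifies the hypotheses of the cited convergence result in terms of the \emph{truncated} local characteristics $b_n$ and $a_n$ of \eqref{eq:definition de bn}--\eqref{eq:definition de an} and the jump condition \eqref{eq:chaine de Markov ne saute pas beaucoup}. Lemma~\ref{lemme:resultats techniques sur An} proves, via the inclusion $D_n(t,x,\epsilon)\subset\{|Z|>\sn\,\gamma\}$ and dominated convergence, that $\tilde{\esp}\left[|Z|^2\,\ind_{D_n(t,x,\epsilon)}\right]\to0$ and $n\,\tilde{\p}(D_n(t,x,\epsilon))\to0$ locally uniformly---a uniform-integrability substitute for the missing third moment---and Lemma~\ref{lemme:an et bn tout pres de a et b} then deduces $b_n\to b$ and $a_n\to a$ using nothing beyond Cauchy--Schwarz. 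Your argument is repairable without extra moment assumptions: replace the cubic remainder by a second-order expansion with the modulus of continuity of the Hessian, split on $\{|\Delta|\le\delta\}$ and its complement, and use $\tilde{\esp}\left[|Z|^2\,\ind_{\{|Z|>\sn\gamma\}}\right]\to0$; but as written the step fails.

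Two secondary points. First, your tightness claim that the conditional second-moment increment is $O(1/n)$ holds only locally in $x$, since $b$ and $\tilde{\sigma}$ are merely continuous and typically unbounded (CIR, geometric Brownian motion); an explicit localization (stopping upon exit from a ball) would be needed in your sketch, whereas the paper inherits it from the cited theorem. Second, ``$o(1)$ uniformly in $x$'' in your remainder claim should be ``locally uniformly'': with unbounded coefficients, $A_nf\to Af$ can only be expected uniformly on sets $\{(t,x)\in\tE:|(t,x)|\leq r\}$, which is all Proposition~\ref{prop:convergence d'une chaine de Markov vers une diffusion, resultat de Ethier et Kurtz} requires.
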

The details of the proof are found in the appendix. We insist on the fact that the convergence in distribution established in Theorem~\ref{T:main} is that of the whole path of the processes~$X_n$, $n\geq {n_0}$, to the whole path of the solution to the SDE~\eqref{E:mainsde}, although we delay until the next section a careful definition of this convergence.

\section{Consequences of the convergence in distribution}\label{sec:implications}

Let us take the example where the SDE~\eqref{E:mainsde} models the evolution of stock prices in the risk-neutral world. Let $(\Omega,\mathcal{F},\p)$, $\{\mathcal{F}_t\}_{t\geq0}$, $(W,X)$ be the weak solution of~\eqref{E:mainsde}. Suppose that the discounted payoff of a derivative security is expressed as a function, say $h(X)$, of the underlying price process~$X$; note that path-dependent options are embedded in this setup. On one hand, it is well known that $\esp[h(X)]$ is the price of this derivative. On the other hand, saying, as in Theorem~\ref{T:main}, that the sequence of approximating processes $(X_n)_{n\geq {n_0}}$ (defined on $(\tilde{\Omega},\tilde{\mathcal{F}},\tilde{\p})$) converges in distribution to $X$ means by definition that
\begin{equation}\label{eq:def conv en dist}
\lim_{n\rightarrow\infty}\tilde{\esp}[g(X_n)]=\esp[g(X)],
\end{equation}
for all bounded {\it continuous} functions~$g$. Thus, if $h$ is nice enough (this is true in particular if $h$ is itself bounded and {\it continuous}), then~\eqref{eq:def conv en dist} holds with $g=h$, and the approximate price given by the scheme, i.e., $\tilde{\esp}[h(X_n)]$, is close to the actual theoretical price when $n$ is large.

For the definition of the convergence of $(X_n)_{n\geq {n_0}}$ to~$X$ to be complete and transparent, this section starts by clarifying what it means for the above-mentioned functionals~$g$ to be continuous. Then, it provides sufficient conditions for $h$ to be nice enough for the sequence of approximate prices to converge to the right price. Moreover, this section includes several examples of such nice functions, and illustrates how they can be used for pricing.

The solution of the SDE~\eqref{E:mainsde} and approximating processes $X_n$ constructed in Section~\ref{sec:schema} are all processes whose paths 
are continuous functions. In other words, these paths belong to $C_E(\reels_+)$, the set of continuous functions $x:\reels_+\rightarrow E$. Let us endow $C_E(\reels_+)$ with the topology of uniform convergence on compact subsets of $\reels_+$, induced by the metric
\begin{equation*}
\Lambda(x,y)\;:=\;\int_0^\infty e^{-u}\,\sup_{0\leq t\leq u}\left(|x(t)-y(t)|\wedge1\right)\,du\,,
\end{equation*}
for $x,y\in C_E(\reels_+)$ (where $|\cdot|$ is the Euclidean norm on~$\reels^d$ and $u\wedge v$ is the minimum of $u,v\in\reels$).
The processes $X_n$, $n\geq {n_0}$, and~$X$ may be regarded as random variables with values in the set $C_E(\reels_+)$, and convergence in distribution of $X_n$ to~$X$ means that~\eqref{eq:def conv en dist} is satisfied for all functions $g:C_E(\reels_+)\rightarrow\reels$ which are bounded, and continuous with respect to the metric~$\Lambda$. %
\begin{RM}\label{rem:continuite par rapport a la sup-norme}
Consider a positive constant~$T$ and a function $g:C_E(\reels_+)\rightarrow\reels$. If $g(x)$ depends only on the values $x(t)$, $0\leq t\leq T$, for any $x\in C_E(\reels_+)$, then $g$ is continuous with respect to the metric~$\Lambda$ if and only if it is continuous with respect to the usual and more tractable sup-metric defined as
$$
  \Lambda_T(x,y):=\sup_{0\leq t\leq T}|x(t)-y(t)|,
$$
for $x,y\in C_E(\reels_+)$.
In mathematical finance, when we consider a finite investment horizon $[0,T]$, the discounted payoff of a derivative security is usually of the form $g(X)$, for some function $g:C_E(\reels_+)\rightarrow\reels$ whose values depend only on the trajectories restricted to the interval $[0,T]$, and hence one simply has to verify continuity of~$g$ with respect to~$\Lambda_T$.
\end{RM}

We can enlarge the set of functions~$g$ such that~\eqref{eq:def conv en dist} holds, extending at once the set of admissible payoff functions. Indeed, as a direct consequence of Theorem~\ref{T:main}, together with Theorems 1.5.1 and 1.5.4 in~\cite{billingsley1968}, we get:
\begin{prop}\label{prop:conditions generales pour la convergence des esperances}
Assume Conditions~\ref{A:cond0}, \ref{A:cond1} and~\ref{A:cond2}. Let $(X_n)_{n\geq {n_0}}$ be the sequence of processes defined by~\eqref{E:algo} and~\eqref{eq:linear interpolation} on the probability space $(\tilde{\Omega},\tilde{\mathcal{F}},\tilde{\p})$, and let $(\Omega,\mathcal{F},\p)$, $\{\mathcal{F}_t\}_{t\geq0}$, $(W,X)$ be the weak solution to the SDE~\eqref{E:mainsde}. Suppose that $g:C_E(\reels_+)\rightarrow\reels$ is  continuous with respect to~$\Lambda$ except on a subset $\mathcal{D}_g\subset C_E(\reels_+)$ satisfying $\p(X\in\mathcal{D}_g)=0$. If the sequence of random variables $(g(X_n))_{n\geq {n_0}}$ is uniformly integrable, then $\tilde{\esp}[g(X_n)]\rightarrow\esp[g(X)]$ as $n\rightarrow\infty$. In particular, this is true when $g$ is bounded.
\end{prop}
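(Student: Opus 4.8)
The plan is to chain together two classical facts about weak convergence of real random variables, using Theorem~\ref{T:main} as the starting point. By Theorem~\ref{T:main}, the processes $X_n$, regarded as $C_E(\reels_+)$-valued random variables, converge in distribution to $X$ with respect to the metric~$\Lambda$. The goal is then to push this convergence forward through~$g$ to obtain convergence in distribution of the \emph{real} random variables $g(X_n)$ to $g(X)$, and finally to upgrade convergence in distribution to convergence of expectations by exploiting uniform integrability. This is exactly the route announced before the statement, namely a combined application of Theorems 1.5.1 and 1.5.4 in~\cite{billingsley1968}.

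First I would apply the mapping theorem (Theorem 1.5.1 in~\cite{billingsley1968}). Let $D_g\subseteq C_E(\reels_+)$ denote the exact set of points at which $g$ fails to be $\Lambda$-continuous; by hypothesis $D_g\subseteq\mathcal{D}_g$, and since the law $P_X$ of~$X$ satisfies $P_X(\mathcal{D}_g)=\p(X\in\mathcal{D}_g)=0$, we also have $P_X(D_g)=0$. Because $g$ is continuous off $D_g$, it is Borel measurable, so $g(X_n)$ and $g(X)$ are genuine real random variables. The mapping theorem then yields $g(X_n)\Rightarrow g(X)$ in~$\reels$, the zero-mass condition on the discontinuity set being precisely what the theorem requires of the limit law.

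Second, I would invoke the criterion linking weak convergence and uniform integrability (Theorem 1.5.4 in~\cite{billingsley1968}): a sequence of real random variables that converges in distribution and is uniformly integrable has an integrable limit and converging expectations. Applying this to $g(X_n)\Rightarrow g(X)$, together with the assumed uniform integrability of $(g(X_n))_{n\geq {n_0}}$, gives $\tilde{\esp}[g(X_n)]\to\esp[g(X)]$. For the final assertion, when $g$ is bounded the family $(g(X_n))_{n\geq {n_0}}$ is uniformly bounded, hence automatically uniformly integrable, and the conclusion follows from the very same argument.

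The main obstacle---really the only point requiring care---is the measure-theoretic bookkeeping in the first step: one must confirm that the actual discontinuity set $D_g$ is contained in the postulated exceptional set $\mathcal{D}_g$, so that the zero-mass hypothesis transfers from $\mathcal{D}_g$ to $D_g$, and that $g$ is Borel measurable, so that $g(X_n)$ and $g(X)$ are well-defined random variables to which the mapping theorem legitimately applies. Once these two points are settled, the remainder is a direct citation of the two theorems of~\cite{billingsley1968}.
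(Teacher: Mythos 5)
Your proposal is correct and follows exactly the route the paper itself takes: it presents Proposition~\ref{prop:conditions generales pour la convergence des esperances} as a direct consequence of Theorem~\ref{T:main} combined with the mapping theorem (Theorem 1.5.1) and the uniform-integrability criterion (Theorem 1.5.4) of \cite{billingsley1968}, with the bounded case handled as you do. Your added bookkeeping on the discontinuity set and Borel measurability of~$g$ is a careful spelling-out of what the paper leaves implicit, not a deviation.
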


\begin{RM}\label{rem:simulations Monte Carlo}
For a given $n$, the central limit theorem (or the strong law of large numbers) implies that the average of the payoff $g(X_n)$ of $N$ Monte Carlo simulations of the trajectories of $X_n$ converges to $\tilde{\esp}[g(X_n)]$. The latter is in turn very close to $\esp[g(X)]$ when $n$ is large in view of Proposition~\ref{prop:conditions generales pour la convergence des esperances}. So, the price of a derivative security is obtained in practice by letting both $n$ and $N$ tend to infinity. This is very similar to the binomial tree method used in the Black-Scholes model (whose connection with the proposed method will actually be discussed further in Section~\ref{ex:GMB time-varying coefficients}).
\end{RM}

For a positive constant~$T$ (typically the investment time horizon), a few examples of continuous payoff functionals $g:C_E(\reels_+)\rightarrow\reels$ are given by $g(x)=x_1(T)$, $g(x)=\left(\int_0^T x_1(t)dt\right)/T$, and $g(x)=\max_{0\leq t\leq T}x_1(t)$, where $x_1$ is the first component of~$x$. Their continuity with respect to~$\Lambda_T$ is easily verified. These functions are useful when dealing respectively with plain vanilla European, Asian, and lookback options. Let us further consider the functions $g(x)=\mathbb{I}\{\max_{0\leq t\leq T} x_1(t)\leq B\}$,  where $B$ is a constant and $\mathbb{I}$ denotes the indicator function, which is continuous except on the set $\mathcal{D}_g=\{x\in C_{\Em}(\reels_+):\,\max_{0\leq t\leq T}x_1(t)= B\}$, and $g(x)=\mathbb{I}\{x_1(T)\leq B\}$, continuous except on $\mathcal{D}_g=\{x\in C_{\Em}(\reels_+):x_1(T)=B\}$. These are typical of situations where barrier and binary options are involved. Finally,
note that the mapping $g:C_{\Em}(\reels_+)\rightarrow C_{\reels}(\reels_+)$ defined by $g(x)=e^{x_1}$ for $x\in C_{\Em}(\reels_+)$ is continuous. This last mapping may be useful when the log-price of an asset is modeled by the SDE~\eqref{E:mainsde}.
Combining functions such as the above allows one to express a wide variety of bounded payoffs, and then easily conclude by Proposition~\ref{prop:conditions generales pour la convergence des esperances} that the sequence of approximate prices for the corresponding derivative security converges to the right price as the time-step goes to zero. Example~\ref{ex:un exemple de payoff borne} illustrates this technique.

\begin{ex}\label{ex:un exemple de payoff borne}
Consider the function $g:C_{\Em}(\reels_+)\rightarrow \reels$ defined by
$$
g(x)\;=\;\exp\left( -\int_0^T x_1(s)\,ds\right)\times\left( K - \frac1T\int_0^T e^{x_2(s)}\,ds\right)^+\,,
$$
for $x \in C_{\Em}(\reels_+)$, where $K$ and $T$ are positive constants and $x_i$ is the i-th component of~$x$. It is continuous over $C_{\Em}(\reels_+)$, and bounded if $m\geq1$ (i.e. $x_1(t)\geq0$ for all $t\geq0$). Thus, it satisfies the assumptions of Proposition~\ref{prop:conditions generales pour la convergence des esperances}. If the first two components of a multi-dimensional process $X$, namely $X_1$ and $X_2$, correspond respectively to the short-rate and the log-price of an asset, then $g(X)$ is the discounted payoff of an Asian put option with strike price~$K$ on this asset.
\end{ex}
We can similarly tackle all the examples considered in~\cite{highammao2005} since these are for path-dependent options with bounded payoffs, for instance up-and-out call options. Unbounded payoffs can be treated via the put-call parity principle or by truncation, as in, respectively, Examples~\ref{ex:put-call parity} and~\ref{ex:truncation} below.
\begin{ex}\label{ex:put-call parity}
Assume Conditions~\ref{A:cond0}, \ref{A:cond1} and~\ref{A:cond2}, with $m\geq2$. Suppose that~\eqref{E:mainsde} models the risk-neutral evolution of~$X$, whose first component $X_1$ is the (nonnegative) short-rate and second component $X_2$ is an asset price.  Let $D(t)=\exp(-\int_0^tX_1(s)ds)$, $t\geq0$, be the discount factor process. In the postulated risk-neutral world, $\{D(t)X_2(t)\}_{t\geq0}$ is a martingale. Upon combining this with equality $X_2(T)-K=(X_2(T)-K)^+-(K-X_2(T))^+$, we get
\begin{equation}\label{eq:put-call parity plain vanilla}
\esp[D(T)\,(X_2(T)-K)^+] = X_2(0)-K\,\esp[D(T)] \;+\; \esp[D(T)\,(K-X_2(T))^+],
\end{equation}
where $T>0$ is the investment horizon and $K>0$ the strike price. On the right-hand side, both payoffs are bounded and continuous, hence we can apply Proposition~\ref{prop:conditions generales pour la convergence des esperances} to approximate the prices of the bond and the put option. The price of the call option is then deduced from the put-call parity~\eqref{eq:put-call parity plain vanilla}.
Furthermore, if $\p(\max_{0\leq t\leq T}X_2(t)=B)=0$, where $B>0$, then, for example, one can similarly price an up-and-in call barrier option from the call option price evaluated in~\eqref{eq:put-call parity plain vanilla} and the price of an up-and-out call option whose payoff is bounded:
\begin{multline*}
\esp\left[D(T)\,(X_2(T)-K)^+\,\mathbb{I}\left\{\max_{0\leq t\leq T}X_2(t)>B\right\}\right] =
\esp\big[D(T)\,(X_2(T)-K)^+\big] \\
- \esp\left[D(T)\,(X_2(T)-K)^+\,\mathbb{I}\left\{\max_{0\leq t\leq T}X_2(t)\leq B\right\}\right].
\end{multline*}
\end{ex}
\begin{ex}\label{ex:truncation}
For simplicity, assume that $d=1$. For $K > 0$, let $h(z) = (z-K)^+$, $z \in \reels$, and let $(h_k)_{k \geq 1}$ be an increasing sequence of nonnegative, continuous and bounded functions on~$\reels$ converging pointwise to $h$; for example, one can take $h_k(z) = (z-K)^+\wedge k$, 
$k=1,2,\dots$. As $\pi_T(x)=x(T)$ is a continuous function over $C_{\Em}(\reels_+)$, then $g=h \circ \pi_T$ and $g_k=h_k \circ \pi_T$ are also continuous functions over $C_{\Em}(\reels_+)$, and in the latter case it is also a bounded function. Then, by Proposition~\ref{prop:conditions generales pour la convergence des esperances}, for each $k$ we have
$$
\lim_{n \to \infty} \tilde{\esp}[h_k(X_n(T))] = \esp[h_k(X(T))] .
$$
Finally, by monotone convergence, $\esp[h_k(X(T))]$ tends to $\esp[h(X(T))]$ as $k$ goes to infinity. Consequently, for $k$ and $n$ large, $\tilde{\esp}[h_k(X_n(T))]$ is arbitrarily close to the price of the call option with maturity~$T$ and strike price~$K$.
\end{ex}

\begin{RM}
In order to establish convergence of the sequence of approximate prices to the right price for a derivative security (with a bounded payoff function), an alternative strategy is to first show strong convergence of the sequence of (continuous-time) approximate processes to the solution of the SDE~\eqref{E:mainsde}, then find a way to deduce, from this strong convergence result, price convergence. This is the approach used by \cite{highammao2005} when \eqref{E:mainsde} corresponds to the CIR dynamics. As seen in \cite{highammao2005}, the second step (the one in which price convergence is deduced) involves probabilistic arguments, which are specific to the given derivative security, and whose complexity vary depending on that of the given derivative security. One source of difficulty comes from the way the discrete-time approximation is completed in between the discrete time steps to get a continuous-time process strongly converging to the CIR process. Indeed, this completion relies on a Brownian trajectory (see (16) in \cite{highammao2005}, and strong convergence results Theorems~3.1 and~3.2 and Corollaries~3.1 and~3.2). Such trajectory is known only in theory, while practical computation of the Monte Carlo approximate price requires a trajectory known in practice. As a result, the trajectory used for the strong convergence results is different from that used in the expression of derivative's Monte Carlo price (compare (16) with (18) and, for instance, (26) or (28) in \cite{highammao2005}). Hence, the analysis necessarily requires establishing that these two trajectories are {\it close} (see Lemma~3.2 in  \cite{highammao2005}). In contrast, in the weak convergence approach proposed here, the continuous-time approximate processes whose convergence to the solution of~\eqref{E:mainsde} is shown is known in practice: it is obtained by simple linear interpolation from the discrete scheme (recall~\eqref{eq:linear interpolation}). More importantly, upon availability of such a weakly convergent process, one gets price convergence essentially by verifying that the payoff function is a continuous function of the path. As we have shown, this is quite simple and technical probabilistic arguments typical of the strong convergence approach are conveniently avoided.
\end{RM}

\section{Examples of models and diffusions}\label{sec:examples}

We now give several examples of models and diffusions for which Theorem~\ref{T:main} applies and hence the scheme converges.

\subsection{Black-Scholes model with time-dependent coefficients}\label{ex:GMB time-varying coefficients}
Let $\beta(t)$ and $\nu(t)$, $t\geq0$, be continuous functions, with $\nu(t)\geq0$ for all $t\geq0$, and suppose that $x_0>0$ is given. Assume that $\sup_{t\geq0}\nu(t)\in(0,\infty)$, and $\inf_{t\geq0}\beta(t)>-\infty$.
Upon letting $b(t,x)=x\beta(t)$ and $\sigma(t,x)=x\nu(t)$, then the solution to the stochastic differential equation~\eqref{E:mainsde} is the one-dimensional geometric Brownian motion (here, $d=m=1$).
Set $\tilde{\sigma}=\sigma$, and $\Sigma=1$. Then Condition~\ref{A:cond2} is satisfied if $\varepsilon$ is a nonnegative random variable, and ${n_0}$ and $\mu>0$ are chosen such that
$$
   x\left( 1+\frac{\beta(t)}{n}-\frac{\nu(t)}{\sn}\mu\right)\;\geq\;0\quad\text{for all }t\geq0,\,x\geq0,\,n\geq {n_0}\,.
$$
This will be the case if, for instance, ${n_0}>-2\inf_{t\geq0}\beta(t)$ and $0<\mu<\sqrt{{n_0}}/(2\sup_{t\geq0}\nu(t))$. 
For each $n\geq {n_0}$, equation~\eqref{E:algo} gives the following scheme to approximate the geometric Brownian motion with time-dependent coefficients and with initial value~$x_0$: set $Y_{n}(0)=x_0$ and then, for $k\geq0$, set
\begin{equation}\label{eq:schema pour le GMB}
Y_{n}(k+1)\;=\;Y_{n}(k)\;\left( 1+\frac{\beta(k/n)}{n}+\frac{\nu(k/n)}{\sn}(\varepsilon_k^{(n)}-\mu)\right) \,.
\end{equation}
One possible choice of distribution for $\varepsilon$ is
\begin{equation}\label{eq:bernoulli pour le MBG}
 \mathbb{P}(\varepsilon=0)\;=\;\frac{1}{1+\mu^2}\quad\text{and}
\quad \p\left(\varepsilon=\mu+\frac1{\mu}\right)\;=\;\frac{\mu^2}{1+\mu^2}\,,
\end{equation}
or, in other words, $\mu\varepsilon/(\mu^2+1)\,\sim\,\text{Bernoulli}(\mu^2/(1+\mu^2))$. One can verify that we have indeed $\esp[\varepsilon]=\mu$ and $\text{Var}(\varepsilon)=1\,(=\Sigma)$. If the coefficients are constant, i.e. $\beta(t)=\beta_0\in\mathbb{R}$, and $\nu(t)=\nu_0>0$, for all $t\geq0$, and $\varepsilon$ follows the distribution in~\eqref{eq:bernoulli pour le MBG}, then it is interesting to note that the scheme~\eqref{eq:schema pour le GMB} reduces to a recombining binomial tree, in which $Y_{n}(k+1)$ is equal to either $u_nY_{n}(k)$ or $d_nY_{n}(k)$, where $u_n=1+\frac{\beta_0}{n}+\frac{\nu_0}{\sn}\frac1{\mu}$ and $d_n=1+\frac{\beta_0}{n}-\frac{\nu_0}{\sn}\mu$. By construction, we have 
$0<d_n<1+\frac{\beta_0}n<u_n$, and it is easy to verify that
\begin{equation}\label{eq:prob of going up; binomial tree}
\p(Y_{n}(k+1)=u_nY_{n}(k))\;=\;
\frac{\mu^2}{1+\mu^2}
\;=\;\frac{(1+\frac{\beta_0}n)-d_n}{u_n-d_n}\,.
\end{equation}
If $\beta_0$ stands for the constant interest rate, and $\{Y_{n}(k)\}_{k\geq0}$ models the risk-neutral evolution of an asset price, then the probability of an up-move (equivalent to multiplying the current price by~$u_n$) found in~\eqref{eq:prob of going up; binomial tree} is consistent with the risk-neutral probability usually postulated in binomial models (see for instance \cite{shreve2004a}).

\subsection{Constant elasticity of variance (CEV) model}\label{ex:CEV}

It is possible to apply Theorem~\ref{T:main} to one dimensional diffusion processes as those in \cite{berkaouietal2008} which have a non-Lipschitz diffusion coefficient. Let $b(t,x)=b(x)$ be a Lipschitz continuous function such that $b(0) > 0$ and let $\sigma(t,x) = \nu x^{\alpha}$, with $\nu > 0$ and $\alpha \in [1/2,1)$. Under these assumptions, and with $x_0\geq0$, there exists a nonnegative strong solution to equation~\eqref{E:mainsde} \citep{berkaouietal2008}, and so $d=m=1$ and $\Em=\reels_+$. Let $\varepsilon$ be a nonnegative random variable with mean~$\mean$ and variance $\Sigma=1$, and take $\tilde{\sigma} = \sigma$. It remains to show that ${n_0}$ and $\mu>0$ can be chosen such that the condition in equation~\eqref{eq:condition de nonnegativite} is satisfied, namely
\begin{equation}\label{eq:nonnegativite CEV}
\frac{b(x)}{n}+c_n(x)\geq0,
\end{equation}
for all $x\geq0$ and $n\geq {n_0}$, where $c_n(x) := x - \frac{\nu x^{\alpha}}{\sqrt{n}} \mu$. If $b$ has Lipschitz constant $K$ (i.e. $|b(x)-b(y)|\leq K|x-y|$ for all $x,y\in\reels$), then $b(x) \geq -Kx$, for all $x \geq 0$. Thus, \eqref{eq:nonnegativite CEV} is immediate if $-Kx/n+c_n(x)\geq0$, which is equivalent to
$$
x \geq x^{(n)} := \left( \frac{\nu \mu}{\sqrt{n}} (1-K/n)^{-1} \right)^{1/(1-\alpha)},
$$
when $1-K/n\geq1-K/{n_0}>0$. Now, it remains to secure inequality~\eqref{eq:nonnegativite CEV} for $x\in[0,x^{(n)})$.

Since $x^{(n)} \to 0$ when $n\to\infty$, and since $b$ is continuous, then for any $\mu$ we can set ${n_0}$ large enough to ensure both ${n_0}>K$ and
\begin{equation}\label{eq:min b0 CEV}
 \min_{x\in[0,x^{(n)}]}b(x)\geq\frac{b(0)}{2},\quad n\geq {n_0}.
\end{equation}
We have $c_n^{\prime}(x) = 1 - \frac{\nu \alpha}{\sqrt{n} x^{1-\alpha}} \mu$ and $c_n^{\prime \prime}(x) = \frac{\nu \alpha (1-\alpha)}{\sqrt{n} x^{2-\alpha}} \mu$. Therefore, the $c_n$'s are convex functions with a global minimum in
$$
x_n := \left( \frac{\alpha \nu \mu}{\sqrt{n}} \right)^{1/(1-\alpha)} \leq x^{(n)}
$$
with value
\begin{equation}\label{eq:non-Lip case, definition of cn}
c_n(x_n) = \left( \frac{\nu \mu}{\sqrt{n}} \right)^{1/(1-\alpha)} \left( \alpha^{1/(1-\alpha)} - \alpha^{\alpha/(1-\alpha)} \right) \leq 0,
\end{equation}
where the last inequality follows from $\alpha<1$.
When $\alpha\in(1/2,1)$, we have $|nc_n(x_n)|\to0$ as $n\to\infty$, and for any $\mu$ one may choose ${n_0}$ sufficiently large to get $|nc_n(x_n)|<b(0)/2$ for all $n\geq {n_0}$. Together with~\eqref{eq:min b0 CEV}, this gives
\begin{equation}\label{eq:CEV suite d'inegalites}
\frac{b(x)}{n} + c_n(x) \geq \frac{1}{n} \min_{x \in [0,x^{(n)}]} b(x) + c_n(x_n)
\geq \frac{1}{n} \left( \frac{b(0)}{2} + n c_n(x_n) \right)\geq 0\,,
\end{equation}
for all $x\in[0,x^{(n)}]$ and $n\geq {n_0}$.
When $\alpha=1/2$, then $c_n(x_n)=-\nu^2\mu^2/(4n)$ (recall~\eqref{eq:non-Lip case, definition of cn}). Then the last inequality in~\eqref{eq:CEV suite d'inegalites} is immediate if $\mu$ satisfies $\mu<\sqrt{2b(0)}/\nu$.

\subsection{One-dimensional affine diffusion process}

Let $h_0,h_1,k_0,k_1,r_0$ be constant numbers satisfying $k_0h_1-k_1h_0>0$, $h_1\not=0$, and $h_0+h_1r_0\geq0$. Then, the process defined by
$$
 dR(t)=(k_0+k_1 R(t))\,dt+\sqrt{h_0+h_1R(t)}\,dW(t),\quad R(0)=r_0,
$$
belongs to the family of affine diffusion processes (see~\cite{duffiepansingleton2000}) and takes its values in $[-h_0/h_1,\infty)$ if $h_1>0$, or in $(-\infty,-h_0/h_1]$ if $h_1<0$. The process $(X(t))_{t\geq0}$ defined by the change of variable $X(t)=h_0+h_1R(t)$ satisfies the dynamics~\eqref{E:mainsde} with $b(t,x)=b(x)=(k_0h_1-k_1h_0) + k_1x$, $\sigma(t,x)=|h_1|\sqrt{x}$, and $x_0=h_0+h_1r_0$. This is the special case of Example~\ref{ex:CEV} with $\alpha=1/2$, $b(0)=k_0h_1-k_1h_0>0$, $\nu=|h_1|$, and Lipschitz constant $K=|k_1|$. In view of Example~\ref{ex:CEV}, we must choose $\mean<\sqrt{2b(0)}/\nu=\sqrt{2(k_0h_1-k_1h_0)}/{|h_1|}$, and then take ${n_0}>K$ large enough for~\eqref{eq:min b0 CEV} to be satisfied, for instance ${n_0}\geq \max(2K, 8|k_1|\nu^2\mu^2/b(0))$.

\subsection{Two-factor CIR model}

In a two-factor interest rate model, the interest rate process is defined as $r(t)=\delta_0+\delta_1 X_1(t)+\delta_2 X_2(t)$, where $\delta_0\geq0$ and $\delta_1,\,\delta_2>0$ (see \cite{shreve2004b}). In the two-factor canonical CIR interest rate model, the two-dimensional process $(X(t))_{t\geq0}$ evolves according to the general dynamics given in equation~\eqref{E:mainsde}, where the coefficients are
\begin{equation*}
b(x)=\left(\begin{array}{c} \beta_1-\lambdauu x_1+\lambdaud x_2 \\
\beta_2+\lambdadu x_1-\lambdadd x_2\end{array}\right)\quad
\text{and}\quad
\sigma(x)=\left(\begin{array}{cc} \sqrt{x_1}& 0 \\
\sqrt{x_2}\rho & \sqrt{x_2}\sqrt{1-\rho^2}\end{array}\right),
\end{equation*}
for $x=(x_1,x_2)\in\Em=\mathbb{R}_+^2$, and with initial condition $X(0)$ in $\mathbb{R}_+^2$. The parameters $\beta_1,\,\beta_2,\,\lambdauu,\,\lambdadd$ are positive constants, $\lambdaud,\,\lambdadu$ are nonnegative, and the instantaneous correlation~$\rho$ satisfies $-1<\rho<1$. Note that we are in the case $d=m=2$. We define, for $x\in\Em$,
\begin{equation}\label{eq:volatilite cas CIR 2D}
\tilde{\sigma}(x)=\left(\begin{array}{cc} \sqrt{x_1}& 0 \\
0 & \sqrt{x_2}\end{array}\right)\quad\text{and}\quad
\Sigma=\left(\begin{array}{cc} 1& \rho \\
\rho & 1\end{array}\right)\,.
\end{equation}
Very similarly as in Example~\ref{ex:exemple du CIR une dim}, we may conclude that Condition~\ref{A:cond2} is satisfied as soon as
\begin{equation}\label{eq:resultat intermediaire CIR 2D}
{n_0}>\max(\lambdauu,\lambdadd)\quad\text{and}\quad0<\mu_i<2\sqrt{\beta_i\left(1-\frac{\lambda_{ii}}{{n_0}}\right)},\quad i=1,2,
\end{equation}
and the $2\times1$ random vector $\varepsilon$ has nonnegative components. Note that if $\varepsilon$ is a nonnegative random vector with  $\tilde{\esp}[\varepsilon]=\mean$, and covariance matrix~$\Sigma$ specified in~\eqref{eq:volatilite cas CIR 2D}, then it must be true that
\begin{equation}\label{eq:condition sur rho 2D}
-\rho\leq\mu_1\mu_2.
\end{equation}
Conversely, under condition~\eqref{eq:condition sur rho 2D}, one may construct a random vector $\varepsilon$ with nonnegative components, $\tilde{\esp}[\varepsilon]=\mean$, and covariance matrix~$\Sigma$ (specified in~\eqref{eq:volatilite cas CIR 2D}). As a result, it is possible to choose ${n_0}$ and $\mu_1\,,\mu_2$  such that~\eqref{eq:resultat intermediaire CIR 2D} and Condition~\ref{A:cond2} are satisfied as long as $-\rho<4\sqrt{\beta_1\beta_2}$.

\subsection{
Stochastic volatility models}\label{ex:GARCH11}
Let $\alpha,\lambda,\nu$ be positive, $\beta\in\mathbb{R}$, and $\rho\in(-1,1)$. Consider the GARCH(1,1) stochastic volatility model
\begin{align}
d V(t)\;&=\;(\alpha-\lambda V(t))dt +\nu V(t) d\tilde{W}_1(t)\,;\label{eq:Garch volatility process}\\
d S(t)\;&=\;S(t)\,(\beta dt+\sqrt{V(t)}d\tilde{W}_2(t))\,,\notag
\end{align}
with $V(0)=v_0>0$ and $S(0)=s_0>0$, and where $\tilde{W}_1$ and $\tilde{W}_2$ are two standard Brownian motions with instantaneous correlation~$\rho$. Even though both processes are nonnegative, we ease the situation by working (as in~\cite{lordetal2010}) with $\log S$ instead of~$S$, which does not need to remain positive. More precisely, we consider equation~\eqref{E:mainsde} with $d=2$ but $m=1$, and
\begin{equation*}
b(x)=\left(
\begin{array}{c}
\alpha - \lambda x_1 \\ \beta-\frac{x_1}{2}
\end{array}
\right)\,,\quad
\sigma(x)=\left(\begin{array}{cc}
\nu x_1 & 0 \\ \rho \sqrt{x_1} & \sqrt{1-\rho^2}\sqrt{x_1}
\end{array}\right),
\end{equation*}
for $x=(x_1,x_2)^\top$, and with $X(0)=(v_0,\log(s_0))^\top$. Define
\begin{equation}\label{eq:sigma tilde et Sigma pour le GARCH}
\tilde{\sigma}(x)=\left(\begin{array}{cc}
\nu x_1 & 0 \\ 0 & \sqrt{x_1}
\end{array}\right)\quad\text{and}\quad \Sigma=\left(\begin{array}{cc} 1& \rho \\
\rho & 1\end{array}\right)\,.
\end{equation}
Note that~\eqref{eq:nonnegativite des epsilon} in Condition~\ref{A:cond2} holds as soon as $\varepsilon_1$ is a nonnegative random variable ($\varepsilon_2$ may take both positive and negative values). For Condition~\ref{A:cond2} to be satisfied, it remains to make sure that
$$
\inf_{x_1\geq0}\left( x_1+\frac{\alpha-\lambda x_1}{n}-\frac{\nu x_1}{\sn}\mu_1\right)\;\geq0\,,\quad\text{for all }n\geq {n_0}\,.
$$
This will be the case if we choose ${n_0}>\lambda$ and $0<\mu_1<\frac{\sqrt{{n_0}}}{\nu}\left(1-\frac{\lambda}{{n_0}}\right)$\,. Note that, for $n\geq {n_0}$, the approximating scheme for $(X_1(t))_{t\geq0}=(V(t))_{t\geq0}$ given by~\eqref{E:algo} may be written as
$$
 Y_{1,n}(k+1)\;=\;\frac{\alpha}{n}+Y_{1,n}(k)\left(1-\frac{\lambda}{n}-\frac{\nu}{\sn}\mu_1\right)+\frac{\nu}{\sn}
 Y_{1,n}(k)\varepsilon_{1,k}^{(n)}\,,
$$
(where $Y_n=(Y_{1,n},Y_{2,n})^\top$ and $\varepsilon^{(n)}_k=(\varepsilon^{(n)}_{1,k},\varepsilon^{(n)}_{2,k})^\top$) and $Y_{1,n}(0)=v_0$. This is commonly refered to as a GARCH(1,1) discrete process, with the particularity that the family of random variables $\{\varepsilon_{1,k}^{(n)}\}_{k\geq0}$ are i.i.d. and follow any nonnegative distribution with mean $\mu_1$ and unit variance. Hence, convergence of the GARCH(1,1) discrete process to its continuous counterpart is a special case of Theorem~\ref{T:main}.

Note that if the process in~\eqref{eq:Garch volatility process} were replaced with a CIR process, a very similar argument could be applied, which would lead to a scheme for Heston's stochastic volatility model.

\section{Numerical results}\label{sec:numerical}

In this section, we present the results of two numerical experiments putting the proposed discretization scheme to the test. Results pertaining to other discretization schemes are also provided in order to facilitate comparisons. In the first experiment, we shall be interested in the price of a bond in the CIR interest rate model. This is a good test case since there is a well-known closed form formula for the bond price allowing for bias computations, and the payoff is path-dependent allowing to test the ability of the method for pricing path-dependent derivatives. In the second experiment, we consider the price of a (plain vanilla) European call option in Heston's stochastic volatility model. This is another interesting test case as it involves the joint action of two correlated processes, the CIR process being one of them, while the pricing problem is still mathematically tractable enough to obtain theoretical prices (\cite{heston1993} determines the characteristic function of the stock log-price at maturity; Fourier inversion can thereafter be used to get the price of a European call option).

\subsection{Bond pricing in the CIR model}\label{sec:bond pricing CIR}

For the first experiment, we consider the CIR process of Example~\ref{ex:exemple du CIR une dim}, that is $(X(t))_{t\geq0}$ is a solution of~\eqref{E:mainsde} when $b(t,x)=\kappa(\beta-x)$ and $\sigma(t,x)=\nu\sqrt{x}$, where $\kappa$, $\beta$ and $\nu$ are positive constants. Assuming that $(X(t))_{t\geq0}$ models the interest rate, the value of a bond that will pay its face value of 1000 dollars in $T$ years is worth $1000\times\esp\left[\exp\left( -\int_0^T X(s)\,ds\right)\right]$ today. Our goal is to get an approximation of this price using Monte Carlo simulations together with various discretization schemes and compare them to the true value of the bond; see, e.g., \cite[Proposition~6.2.5]{lambertonlapeyre1991} for the closed-form formula.

It is well known that the CIR process will never hit the origin, that is $\p(X(t)>0\text{ for all }t\geq0)=1$ if $x_0>0$, when the model parameters satisfy $\sqrt{2\kappa\beta}\geq\nu$, while it eventually hits the origin with probability one when $\sqrt{2\kappa\beta}<\nu$; see, e.g., \cite[Proposition~6.2.4]{lambertonlapeyre1991}. Since we want to assess how the different discretization schemes manage the zero boundary condition, it is natural to test the methods for parameters for which the boundary problem will theoretically arise. In fact, we put $\kappa=0.5$ and $\beta=0.04$, so that $\sqrt{2\kappa\beta}=0.2$, and consider two values exceeding $0.2$ for the volatility coefficient: $\nu=0.3$ and $\nu=1$. In the latter case, the zero boundary problem is magnified by the larger volatility and so it should be intuitively more difficult to get accurate prices. Tables~\ref{tab:CIR low volatility} and~\ref{tab:CIR high volatility} show, for different values of~$n$ and Euler-type discretization schemes, the bias and the margin of error at the 95\% confidence level for the Monte Carlo prices, based on one million trajectories of the CIR process (i.e., $N=1\,000\,000$ in the notation of Remark~\ref{rem:simulations Monte Carlo}). For the reader's convenience, those biases that are not significantly different from zero at the 95\% level appear in bold fonts. In our discretization scheme (the one defined in~\eqref{eq:schema une dim}), we have set for $\varepsilon$ the two-valued distribution, one of whose values is zero, with average $\mu=0.8$ in the low volatility case and $\mu=0.28$ in the high volatility case (recall~\eqref{eq:bernoulli pour le MBG}, where this Bernoulli-type distribution is explicitly given). Note that with these values of $\mu$ the condition $0<\mean\leq\frac2\nu \left( \kappa\beta(1-(\kappa/n_0)) \right)^{1/2}$ from Example~\ref{ex:exemple du CIR une dim} is satisfied with respectively ${n_0}=4$ and ${n_0}=50$ for the two sets of parameters. This choice of distribution is motivated by the fact that the Bernoulli distribution is the simplest there is, which results in reduced simulation time. Note that the approximation of the payoff term $\int_0^T X(s)\,ds$ requires an approximation of the path based on the discretization scheme for the entire interval $[0,T]$. For the scheme (b4), the trajectories have been interpolated in between the time-steps using $X_n(s)=|Y_n(\lfloor ns \rfloor)|$, $s\in[0,T]$, since \citep[(4), (26), Theorem~4.1]{highammao2005} prove that the bond approximate prices converge to the right price as $n$ goes to infinity under this setup. For all other schemes, the linear interpolation defined in~\eqref{eq:linear interpolation} is used. Since the payoff function $g(x)=1000\times\exp(-\int_0^T x(s)\,ds)$, for $x\in C_{\reels_+}(\reels_+)$, is continuous and bounded, then Proposition~\ref{prop:conditions generales pour la convergence des esperances} implies that Monte Carlo bond prices from our scheme converge to the right price as $n$ goes to infinity.

\begin{table}[h]
\begin{center}
\caption{Bond pricing in the low volatility case}
\begin{tabular}{|c|c|c|c|c|c|}
  \hline
  $n=$ steps$/$year & Bernoulli & (b1) & (b2) & (b3) & (b4) \\\hline\hline
  4 & $\begin{array}{c} {0.204} \\ (0.123) \end{array}$ &    
      $\begin{array}{c} {0.258} \\ (0.126) \end{array}$ &    
      $\begin{array}{c} {1.367} \\ (0.129) \end{array}$ &        
      $\begin{array}{c} {-12.825} \\ (0.113) \end{array}$ &        
      $\begin{array}{c} {-9.174} \\ (0.107) \end{array}$ \\\hline 
  6 & $\begin{array}{c} {\bf 0.080} \\ (0.122) \end{array}$ &        
      $\begin{array}{c} {0.168 } \\ (0.124) \end{array}$ &        
      $\begin{array}{c} {0.842 } \\ (0.125) \end{array}$ &        
      $\begin{array}{c} {-9.305 } \\ (0.114) \end{array}$ &        
      $\begin{array}{c} {-6.124 } \\ (0.110) \end{array}$ \\\hline 
  8 & $\begin{array}{c} {\bf 0.040} \\ (0.121) \end{array}$ &        
      $\begin{array}{c} {0.188 } \\ (0.122) \end{array}$ &        
      $\begin{array}{c} {0.505} \\ (0.123) \end{array}$ &        
      $\begin{array}{c} {-7.314 } \\ (0.115) \end{array}$ &        
      $\begin{array}{c} {-4.696} \\ (0.112) \end{array}$ \\\hline 
  10 & $\begin{array}{c} {0.154 } \\ (0.120) \end{array}$ &        
      $\begin{array}{c} {\bf 0.109} \\ (0.121) \end{array}$ &        
      $\begin{array}{c} {0.337 } \\ (0.122) \end{array}$ &        
      $\begin{array}{c} {-6.070 } \\ (0.115) \end{array}$ &        
      $\begin{array}{c} {-3.744} \\ (0.114) \end{array}$ \\\hline 
  20 & $\begin{array}{c} {\bf 0.060} \\ (0.119) \end{array}$ &        
      $\begin{array}{c} {\bf 0.057} \\ (0.120) \end{array}$ &        
      $\begin{array}{c} {0.166 } \\ (0.120) \end{array}$ &        
      $\begin{array}{c} {-3.545 } \\ (0.116) \end{array}$ &        
      $\begin{array}{c} {-1.860 } \\ (0.116) \end{array}$ \\\hline 
  40 & $\begin{array}{c} {\bf -0.062}^\ast \\ (0.119) \end{array}$ &        
      $\begin{array}{c} {\bf 0.057}^\ast \\ (0.119) \end{array}$ &        
      $\begin{array}{c} {\bf -0.051} \\ (0.120) \end{array}$ &        
      $\begin{array}{c} {-2.257 } \\ (0.117) \end{array}$ &        
      $\begin{array}{c} {-0.971 } \\ (0.117) \end{array}$ \\\hline 
  80 & $\begin{array}{c} {\bf 0.029}^\ast \\ (0.118) \end{array}$ &        
      $\begin{array}{c} {\bf -0.008}^\ast \\ (0.119) \end{array}$ &        
      $\begin{array}{c} {\bf -0.082}^\ast \\ (0.119) \end{array}$ &        
      $\begin{array}{c} {-1.558 } \\ (0.117) \end{array}$ &        
      $\begin{array}{c} {-0.373 } \\ (0.118) \end{array}$ \\\hline 
  160 & $\begin{array}{c} {\bf 0.008}^\ast \\ (0.119) \end{array}$ &        
      $\begin{array}{c} {\bf -0.000}^\ast \\ (0.119) \end{array}$ &        
      $\begin{array}{c} {\bf 0.005}^\ast \\ (0.119) \end{array}$ &        
      $\begin{array}{c} {-0.991 } \\ (0.118) \end{array}$ &        
      $\begin{array}{c} {-0.261 } \\ (0.118) \end{array}$ \\\hline\hline 
rate of conv. & $0.54$ & $0.94$ & $1.41$ & $1.99$ & $2.85$ \\\hline
\end{tabular}
\label{tab:CIR low volatility}\\
\begin{minipage}{12cm}
Bias and margin of error at the 95\% confidence level (in parentheses).\\
CIR process parameters: $\kappa=0.5$, $\beta=x_0=0.04$, $\nu=0.3$.\\
Bond parameters: $T=2$ years and face value of 1000.\\
True bond value: 925.258.\\
Mean for $\varepsilon$: $\mu=0.8$.
\end{minipage}
\end{center}
\end{table}

\begin{figure}
\caption{Bias comparison in the low volatility case}\label{fig:CIR low volatility}
\includegraphics[scale=0.5]{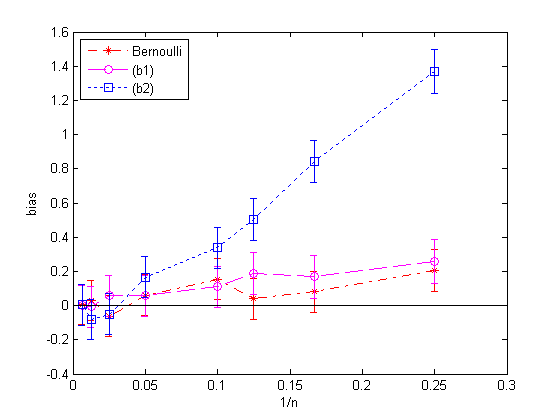}
\end{figure}

\begin{table}[h]
\begin{center}
\caption{Bond pricing in the high volatility case}
\begin{tabular}{|c|c|c|c|c|c|}
  \hline
  $n=$ steps$/$year & Bernoulli & (b1) & (b2) & (b3) & (b4) \\\hline\hline
  50 & $\begin{array}{c} {-0.678 } \\ (0.249) \end{array}$ &        
      $\begin{array}{c} {2.044 } \\ (0.270) \end{array}$ &        
      $\begin{array}{c} {4.720 } \\ (0.271) \end{array}$ &        
      $\begin{array}{c} {-117.019 } \\ (0.318) \end{array}$ &        
      $\begin{array}{c} {-108.046 } \\ (0.311) \end{array}$ \\\hline 
  100 & $\begin{array}{c} {-0.329 } \\ (0.250) \end{array}$ &        
      $\begin{array}{c} {0.798 } \\ (0.263) \end{array}$ &        
      $\begin{array}{c} {2.086 } \\ (0.263) \end{array}$ &        
      $\begin{array}{c} {-98.960} \\ (0.313) \end{array}$ &        
      $\begin{array}{c} {-90.511} \\ (0.307) \end{array}$ \\\hline 
  200 & $\begin{array}{c} {-0.435 } \\ (0.252) \end{array}$ &        
      $\begin{array}{c} {0.442 } \\ (0.259) \end{array}$ &        
      $\begin{array}{c} {1.263} \\ (0.257) \end{array}$ &        
      $\begin{array}{c} {-84.907} \\ (0.308) \end{array}$ &        
      $\begin{array}{c} {-76.459} \\ (0.302) \end{array}$ \\\hline 
  400 & $\begin{array}{c} {-0.368} \\ (0.253) \end{array}$ &        
      $\begin{array}{c} {0.278} \\ (0.257) \end{array}$ &        
      $\begin{array}{c} {0.453} \\ (0.257) \end{array}$ &        
      $\begin{array}{c} {-74.073} \\ (0.304) \end{array}$ &        
      $\begin{array}{c} {-66.008} \\ (0.298) \end{array}$ \\\hline 
  800 & $\begin{array}{c} {\bf -0.207} \\ (0.253) \end{array}$ &        
      $\begin{array}{c} {\bf 0.007} \\ (0.256) \end{array}$ &        
      $\begin{array}{c} {\bf 0.144} \\ (0.256) \end{array}$ &        
      $\begin{array}{c} {-65.310 } \\ (0.300) \end{array}$ &        
      $\begin{array}{c} {-57.324 } \\ (0.294) \end{array}$ \\\hline 
  1600 & $\begin{array}{c} {\bf -0.064}^\ast \\ (0.253) \end{array}$ &        
      $\begin{array}{c} {\bf -0.067}^\ast \\ (0.256) \end{array}$ &        
      $\begin{array}{c} {0.402} \\ (0.255) \end{array}$ &        
      $\begin{array}{c} {-58.399} \\ (0.297) \end{array}$ &        
      $\begin{array}{c} {-50.397} \\ (0.290) \end{array}$ \\\hline\hline 
rate of conv. & $0.33$ & $1.77$ & $0.88$ & $0.20$ & $0.22$ \\\hline
\end{tabular}
\label{tab:CIR high volatility}\\
\begin{minipage}{12cm}
Bias and margin of error at the 95\% confidence level (in parentheses).\\
CIR process parameters: $\kappa=0.5$, $\beta=x_0=0.04$, $\nu=1$.\\
Bond parameters: $T=2$ years and face value of 1000.\\
True bond value: 940.024.\\
Mean for $\varepsilon$: $\mu=0.28$.
\end{minipage}
\end{center}
\end{table}

\begin{figure}
\caption{Bias comparison in the high volatility case}\label{fig:CIR high volatility}
\includegraphics[scale=0.5]{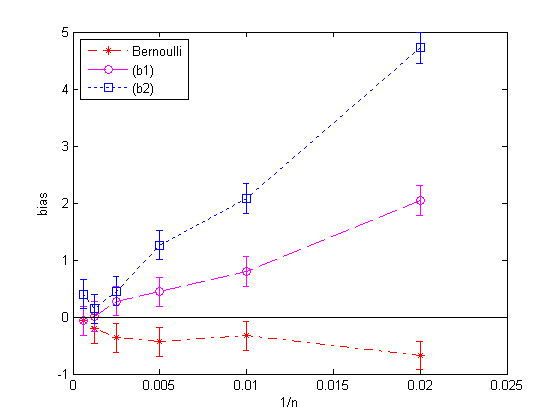}
\end{figure}

In Table~\ref{tab:CIR low volatility} and Table~\ref{tab:CIR high volatility} we see that biases are consistently less than a dollar for our scheme (in the Bernoulli column) and schemes (b1) and (b2), and for the values of~$n$ shown; this is quite small in comparison to the true bond prices which are respectively 925 and 940 dollars. The biases and intervals at the 95\% confidence level of those three schemes are then compared graphically in Figure~\ref{fig:CIR low volatility} and Figure~\ref{fig:CIR high volatility}. As anticipated, for all methods it is more difficult to evaluate the bond price in the high volatility case. Indeed, in the latter case much more time steps per year are required to get an approximate price which is not significantly different from the true price (at the 95\% confidence level, based on one million trajectories).

Table~\ref{tab:CIR low volatility} and Table~\ref{tab:CIR high volatility} were designed primarily to compare biases, some of which are quite different in sizes for the same~$n$. As a complement of information, their last rows provide a rough estimate of the weak order of convergence. Recall that these tables are based on one million trajectories, which is actually not sufficient to get precise numerical orders of convergence. 
Here, the rate of convergence is estimated to be the slope (in absolute value), when linearly regressing $\log(|\text{bias}|)$ on $\log(n)$ (and a constant term). For methods that do very well, note that several values of the bias shown in the tables are not significantly different from zero. Hence, these values carry very little information on the exact size of the deviation between $\tilde{\esp}[g(X_n)]$ and the real price $\esp[g(X)]$; they are essentially noise. Hence, some of these values (marked by an asterisk in the tables) have not been taken into consideration in the regressions. By using much larger numbers of trajectories, the actual deviation between $\tilde{\esp}[g(X_n)]$ and the real price has been estimated with much more accuracy in the case of the proposed scheme. The results are reported in Table~\ref{tab:CIR Bernoulli order of convergence}. In the low volatility case, the estimated order of convergence of 1.2 lies between the rough estimates obtained for methods (b1) and (b2), namely 0.94 and 1.41. However, the estimated order of convergence of 0.36 is smaller than both the rough estimates for methods (b1) and (b2) in the high volatility case, so the actual order of convergence might be slower for the proposed method. Calculation of the real order of weak convergence could be the subject of future work. Note that regression of the root mean square error (RMSE) instead of the log-bias would have produced similar results. 

\begin{table}[h]
\begin{center}
\caption{Bond pricing using the Bernoulli distribution}
\begin{tabular}{|c|r|c|c|c|c|r|c|c|c|}
  \hline
  \multicolumn{5}{|c|}{low volatility case} & \multicolumn{5}{|c|}{high volatility case}\\\hline
  $n$ & \multicolumn{1}{|c|}{$N$} & bias & margin & RMSE & $n$ & \multicolumn{1}{|c|}{$N$} & bias & margin & RMSE\\\hline\hline
  4  & $4\times 10^6$  & $0.1951$ & $0.0616$ & $0.1977$ & 50  & $4\times 10^6$  & $-0.4800$ & $0.1243$ & $0.4842$ \\
  8  & $16\times 10^6$ & $0.0675$ & $0.0302$ & $0.0692$ & 100 & $8\times 10^6$  & $-0.3940$ & $0.0884$ & $0.3965$ \\
  10 & $25\times 10^6$ & $0.0801$ & $0.0240$ & $0.0810$ & 200 & $16\times 10^6$ & $-0.2883$ & $0.0628$ & $0.2901$ \\
  20 & $100\times 10^6$& $0.0197$ & $0.0119$ & $0.0206$ & 400 & $32\times 10^6$ & $-0.2583$ & $0.0447$ & $0.2593$ \\
  40 & $400\times 10^6$& $0.0134$ & $0.0059$ & $0.0137$ & 800 & $64\times 10^6$ & $-0.1694$ & $0.0317$ & $0.1701$ \\\hline\hline
  \multicolumn{2}{|r|}{rate of conv.}  & $1.209$  &  & $1.201$ & \multicolumn{2}{|r|}{} & $0.362$ & & $0.363$ \\\hline
\end{tabular}
\label{tab:CIR Bernoulli order of convergence}\\
\begin{minipage}{15cm}
Bias, margin of error at the 95\% confidence level, and RMSE for different combinations of~$n$ (the number of steps per year) and~$N$ (the number of trajectories).\\
The parameters are the same as in Table~\ref{tab:CIR low volatility} (resp. Table~\ref{tab:CIR high volatility}) in the low (resp. high) volatility case.
\end{minipage}
\end{center}
\end{table}

From this first numerical experiment, one can safely conclude that our scheme is very competitive when it comes to discretizing the CIR dynamic and pricing a path-dependent derivative, in both low and high volatility environments.

\subsection{Pricing of a European call in Heston's model} For the second experiment, we consider Heston's stochastic volatility model (in the risk-neutral world), in which the squared volatility~$V$ and stock price~$S$ evolve according to
\begin{align*}
d V(t)\;&=\;\kappa(\beta-V(t))dt +\nu \sqrt{V(t)} d\tilde{W}_1(t)\,,\\
d S(t)\;&=\;S(t)\,(r dt+\sqrt{V(t)}d\tilde{W}_2(t))\,,
\end{align*}
with $V(0)=v_0$ and $S(0)=s_0$. Here, $\kappa$, $\beta$, $\nu$, $r$, $v_0$ and $s_0$ are positive constants, and $\tilde{W}_1$ and $\tilde{W}_2$ are two standard Brownian motions with instantaneous correlation~$\rho$. Our goal is to approximate the price of a European plain vanilla option whose value is $\esp[e^{-r T}(S(T)-K)^+]$. For ease of comparison, we use the same set of model parameters as in experiment SV-I in \cite{lordetal2010} and experiment two in \cite{broadiekaya2006}. These parameters are shown below Table~\ref{tab:Heston SV-I}. Note that $\sqrt{2\kappa\beta}=0.6<\nu=1$, hence the volatility process eventually hits zero with probability one, and the way the discretization schemes handle the zero boundary is strongly put to the test.

As in \cite{lordetal2010}, we work with $\log S$ instead of~$S$ to simplify matters. In other words, we consider equation~\eqref{E:mainsde} with $d=2$, $m=1$, and
\begin{equation*}
b(x)=\left(
\begin{array}{c}
\kappa(\beta - x_1) \\ r-\frac{x_1}{2}
\end{array}
\right)\,,\quad
\sigma(x)=\left(\begin{array}{cc}
\nu \sqrt{x_1} & 0 \\ \rho \sqrt{x_1} & \sqrt{1-\rho^2}\sqrt{x_1}
\end{array}\right),
\end{equation*}
for $x=(x_1,x_2)^\top\in E$, and with $X(0)=(v_0, \log s_0)^\top$.
Define $\tilde{\sigma}(x)$ as the $2\times2$ diagonal matrix $\tilde{\sigma}(x)=\text{diag}[\nu\sqrt{x_1},\sqrt{x_1}]$, and $\Sigma$ exactly as in~\eqref{eq:sigma tilde et Sigma pour le GARCH}. Then Condition~\ref{A:cond2} is satisfied if ${n_0}>\kappa$ and $\mu_1$ satisfies $0<\mu_1\leq\frac2\nu \left( \kappa\beta(1-(\kappa/n_0)) \right)^{1/2}$ (compare with Example~\ref{ex:exemple du CIR une dim} dealing with the CIR in isolation), and if the random vector $\varepsilon=(\varepsilon_1,\varepsilon_2)^\top$ is such that $\varepsilon_1$ is a nonnegative variable with mean $\mu_1$ and variance~1, $\varepsilon_2$ is any variable with variance~1, and $\varepsilon_1$ and $\varepsilon_2$ have correlation~$\rho$. In order to produce such a random vector for the simulations, we independently generate $\varepsilon_1$ and $\varepsilon_3$ as the Bernoulli-type variables with respective means $\mu_1=0.657$ and $\mu_3=1$ and variance~1 (as in~\eqref{eq:bernoulli pour le MBG}), then we set $\varepsilon_2=\rho\;\varepsilon_1+\sqrt{1-\rho^2}\;\varepsilon_3$. The payoff functional $g(x)=e^{-rT}(e^{x_2(T)}-K)^+$, for $x=(x_1,x_2)^\top\in C_E(\reels_+)$, is continuous, as the payoff function of the corresponding put option would be. Convergence of the Monte Carlo prices to the right price when using our scheme (in~\eqref{E:algo}) hence follows from Proposition~\ref{prop:conditions generales pour la convergence des esperances} and the put-call parity.

All results in Table~\ref{tab:Heston SV-I} are based on one million joint trajectories of the squared-volatility (following the CIR dynamics) and the log-price (i.e., $N=1\,000\,000$). No matter the scheme, the discretization of the log-price involes the square-root of the discretized CIR process, which might not be well defined if the discretized CIR becomes negative. The latter might happen if we use (b1), (b2) or (b4) on the CIR process. Following \cite{lordetal2010}, we fix this problem by using, for the discretization of the log-price, the positive part of the discretized CIR for schemes (b1) and (b2), and the absolute value of the discretized CIR for scheme (b4). Also, note that \cite{kahletal2008} establish that the implicit Milstein method produces a well-defined and nonnegative approximate path for the CIR process only when the parameters satisfy $\sqrt{2\kappa\beta}\geq\nu$. Our parameter values do not satisfy this condition, which is our main reason for not including such implicit schemes in this section. Indeed, with our parameters, the implicit scheme would have to be combined with another fix to work. However, the interested reader may consult Table~4 in \cite{lordetal2010}, which is similar to Table~\ref{tab:Heston SV-I} and includes results for the Milstein implicit scheme fixed to remain nonnegative.

\begin{table}[h]
\begin{center}
\caption{Pricing of a European call option}
\begin{tabular}{|c|c|c|c|c|c|}
  \hline
  $n=$ steps$/$year & Bernoulli & (b1) & (b2) & (b3) & (b4) \\\hline\hline
    5 & $\begin{array}{c} {-0.121 } \\ (0.108) \end{array}$ &        
      $\begin{array}{c} {1.868 } \\ (0.128) \end{array}$ &        
      $\begin{array}{c} {0.359 } \\ (0.117) \end{array}$ &        
      $\begin{array}{c} {8.318 } \\ (0.194) \end{array}$ &        
      $\begin{array}{c} {6.995 } \\ (0.188) \end{array}$ \\\hline 
10 & $\begin{array}{c} {\bf -0.087} \\ (0.111) \end{array}$ &        
      $\begin{array}{c} {0.948 } \\ (0.120) \end{array}$ &        
      $\begin{array}{c} {0.185 } \\ (0.115) \end{array}$ &        
      $\begin{array}{c} {6.055 } \\ (0.165) \end{array}$ &        
      $\begin{array}{c} {4.453 } \\ (0.158) \end{array}$ \\\hline 
20 & $\begin{array}{c} {\bf -0.061} \\ (0.0.112) \end{array}$ &        
      $\begin{array}{c} {0.500 } \\ (0.116) \end{array}$ &        
      $\begin{array}{c} {0.137 } \\ (0.113) \end{array}$ &        
      $\begin{array}{c} {4.419 } \\ (0.148) \end{array}$ &        
      $\begin{array}{c} {2.733 } \\ (0.140) \end{array}$ \\\hline 
40 & $\begin{array}{c} {\bf -0.070}^\ast \\ (0.112) \end{array}$ &        
      $\begin{array}{c} {0.147 } \\ (0.115) \end{array}$ &        
      $\begin{array}{c} {\bf -0.013} \\ (0.114) \end{array}$ &        
      $\begin{array}{c} {3.181 } \\ (0.139) \end{array}$ &        
      $\begin{array}{c} {1.649 } \\ (0.129) \end{array}$ \\\hline 
80 & $\begin{array}{c} {\bf -0.013}^\ast \\ (0.113) \end{array}$ &        
      $\begin{array}{c} {0.126 } \\ (0.114) \end{array}$ &        
      $\begin{array}{c} {\bf 0.057} \\ (0.113) \end{array}$ &        
      $\begin{array}{c} {2.377 } \\ (0.131) \end{array}$ &        
      $\begin{array}{c} {1.074 } \\ (0.123) \end{array}$ \\\hline 
160 & $\begin{array}{c} {\bf 0.093}^\ast \\ (0.114) \end{array}$ &        
      $\begin{array}{c} {\bf 0.059} \\ (0.114) \end{array}$ &        
      $\begin{array}{c} {\bf 0.030}^\ast \\ (0.113) \end{array}$ &        
      $\begin{array}{c} {1.795 } \\ (0.127) \end{array}$ &        
      $\begin{array}{c} {0.651 } \\ (0.119) \end{array}$ \\\hline\hline 
rate of conv. & $0.50$ & $1.01$ & $0.92$ & $0.45$ & $0.69$ \\\hline
\end{tabular}
\label{tab:Heston SV-I}\\
\begin{minipage}{14cm}
Bias and margin of error at the 95\% confidence level (in parentheses).\\
Volatility parameters: $\kappa=2$, $\beta=v_0=0.09$, $\nu=1$.\\
Option and stock price parameters: $T=5$ years, $s_0=100$, $K=100$, $r=0.05$.\\
Correlation parameter: $\rho=-0.3$.\\
True option price: 34.9998.\\
Means for $\varepsilon_1$ and $\varepsilon_3$: $\mu_1=0.657$ and $\mu_3=1$.
\end{minipage}
\end{center}
\end{table}

\begin{figure}
\caption{Bias comparison for the option price}\label{fig:Heston SV-I}
\includegraphics[scale=0.5]{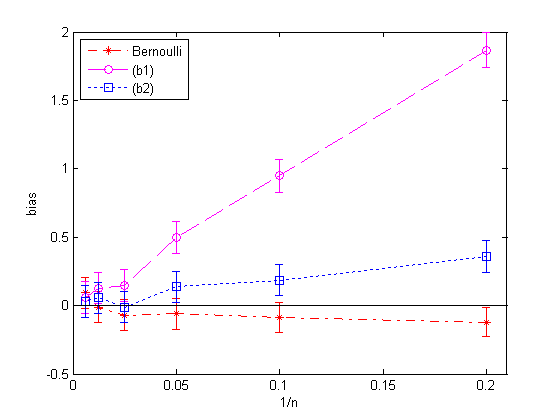}
\end{figure}

Table~\ref{tab:Heston SV-I} confirms what has been observed in the bond pricing experiment: the first three schemes outperform schemes (b3) and (b4) in terms of biases. The first three schemes are also compared graphically in Figure~\ref{fig:Heston SV-I}. \cite{lordetal2010} numerically compare several discretization schemes for option pricing in Heston's model and conclude that scheme (b2) is very efficient. From Table~\ref{tab:Heston SV-I} and Figure~\ref{fig:Heston SV-I}, one can clearly conclude that our scheme is competitive and can be compared favourably with scheme (b2).

As in Section~\ref{sec:bond pricing CIR}, orders of convergence are estimated. Rough estimates in Table~\ref{tab:Heston SV-I} may be compared with those in Table~4 in \cite{lordetal2010}, who claim that it is quite hard in this case to properly estimate the order of convergence even when using 10 millions trajectories. However, care is advisable when interpreting the estimates of the order of convergence in Table~\ref{tab:Heston SV-I}, especially for our method. Indeed, the biases presented in Table~\ref{tab:Heston SV-I} are not significantly different from zero (for our scheme only) for values of $n$ as small as 10 and 20; moreover, the bias for $n=5$ is the lowest. A more accurate estimate is therefore provided in Table~\ref{tab:Heston Bernoulli order of convergence}. Although the estimated rate of convergence of 0.65 is smaller than those estimated for methods (b1) and (b2), it has to be kept in mind that (as is made clear in Figure~\ref{fig:Heston SV-I}) the biases for the proposed method are small overall in comparison with other methods. Hence, one may certainly claim that the proposed method is a competitive alternative to other methods, even more since Figure~\ref{fig:CIR low volatility} and Figure~\ref{fig:CIR high volatility} led to a similar conclusion in another meaningful experiment. 

\begin{table}[h]
\begin{center}
\caption{Option pricing using the Bernoulli distribution}
\begin{tabular}{|c|r|c|c|c|}
  \hline
  $n$ & \multicolumn{1}{|c|}{$N$} & bias & margin & RMSE \\\hline\hline
  5  & $5\times10^6$  & $-0.1144$ & $0.0480$ & $0.1169$ \\
  10 & $10\times10^6$ & $-0.0911$ &	$0.0350$ & $0.0929$ \\
  20 & $20\times10^6$ & $-0.0435$ &	$0.0251$ & $0.0453$ \\
  40 & $40\times10^6$ & $-0.0452$ &	$0.0178$ & $0.0461$ \\
  80 & $80\times10^6$ & $-0.0227$ & $0.0126$ & $0.0236$ \\
  160& $160\times10^6$& $-0.0109$ & $0.0090$ & $0.0118$ \\\hline\hline
  \multicolumn{2}{|r|}{rate of conv.}  & $0.654$ & & $0.641$ \\\hline
\end{tabular}
\label{tab:Heston Bernoulli order of convergence}\\
\begin{minipage}{10cm}
Bias, margin of error at the 95\% confidence level, and RMSE for different combinations of~$n$ (the number of steps per year) and~$N$ (the number of trajectories).\\
The parameters are the same as in Table~\ref{tab:Heston SV-I}.
\end{minipage}
\end{center}
\end{table}

Finally, Tables~\ref{tab:CIR low volatility}, \ref{tab:CIR high volatility} and \ref{tab:Heston SV-I} clearly illustrate that the rate of convergence, as measured here, is not as good a measure of precision as it appears to be. In these tables, the error of the biases are comparable but our proposed method is much more precise in the sense that the bias is not significantly different from zero, even for small number of steps $n$.

\subsection{Conclusion} In summary, in addition to the simplicity of our scheme, its great flexibility and the fact that approximate prices are theoretically known to converge, it is observed to be numerically competitive with other existing schemes in two representative experiments. We recall that our scheme and our convergence results hold in more general diffusion models and for a wide variety of payoffs.

\begin{center}{\sc Acknowledgements}\end{center}

We thank anonymous referees and the Associate Editor for a careful reading and useful comments on an earlier version of this paper.

Funding in partial support of this work was provided by the Natural Sciences and Engineering Research Council of Canada, the Fonds qu\'eb\'ecois de la
recherche sur la nature et les technologies, and the Insti\-tut de finance math\'ematique de Montr\'eal.

\appendix

\section{Proof of Theorem~\ref{T:main}}

We now prove Theorem~\ref{T:main}. As discussed in Section~\ref{sec:convergence}, we will use the martingale problem formulation. We need the next result, which gives sufficient conditions for $A_nf$ defined in~\eqref{eq:generateur discret} to converges to $Af$, for all $f\in C_c^\infty(\reels^d)$, and then concludes that this is enough for the corresponding processes to converge as well. Throughout, $|\cdot|$ is the Euclidean norm.

\begin{prop}\label{prop:convergence d'une chaine de Markov vers une diffusion, resultat de Ethier et Kurtz}
Assume Conditions~\ref{A:cond0} and~\ref{A:cond1}. For each $n\geq {n_0}$, let $K_n(t,x,\Gamma)$ be a time-dependent transition function defined on $\tE\times\mathcal{B}(\reels^d)$ and such that $K_n(t,x,\Em)=1$ for all $(t,x)\in\tE$.
For $n\geq {n_0}$, set
\begin{equation}\label{eq:definition de bn}
b_n(t,x)\;:=\;n\int_{|y-x|\leq1} (y-x)K_n(t,x,dy)
\end{equation}
and
\begin{equation}\label{eq:definition de an}
a_n(t,x)\;:=\;n\int_{|y-x|\leq1} (y-x)(y-x)^\top K_n(t,x,dy)\,,
\end{equation}
for each $(t,x)\in\tE$.
Assume further that, for any $r>0$ and $\epsilon>0$, the following sequences tend to zero as $n$ goes to infinity:
\begin{equation}\label{eq:an et bn sont pres de a et b}
\sup_{\substack{(t,x)\in \tE\\|(t,x)|\leq r}}|b_n(t,x)-b(t,x)|\,,\quad\sup_{\substack{(t,x)\in \tE\\|(t,x)|\leq r}}|a_n(t,x)-a(t,x)|\,,
\end{equation}
and
\begin{equation}\label{eq:chaine de Markov ne saute pas beaucoup}
\sup_{\substack{(t,x)\in \tE\\|(t,x)|\leq r}} n\, K_n(t,x,\,\{y\,:\,|y-x|\geq\epsilon\})\,.
\end{equation}
Let $({Y}_n(k))_{k\geq0}$ be a Markov chain, with ${Y}_n(0)=x_0\in\Em$, and transitions governed by~$K_n$ through equation~\eqref{eq:transitions de Ynk}. Then
$$
 \lim_{n\rightarrow\infty}\sup_{\substack{(t,x)\in \tE\\|(t,x)|\leq r}} |A_nf(t,x)-Af(t,x)|=0,
$$
for all $r>0$ and $f\in C_c^\infty(\reels^d)$, where $A_n$ is defined in terms of~$K_n$ in~\eqref{eq:generateur discret} and $A$ is defined in~\eqref{E:gen}. Moreover, the sequence of processes $(X_n)_{n\geq {n_0}}$, defined from $(Y_n)_{n\geq {n_0}}$ through~\eqref{eq:linear interpolation}, converges in distribution to the solution of the martingale problem for $(A,x_0)$.
\end{prop}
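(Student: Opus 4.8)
The plan is to prove the proposition in two stages: first establish the locally uniform convergence of the generators $A_nf\to Af$, and then feed this into the standard martingale-problem convergence machinery for Markov processes characterized by their generators, whose well-posedness hypothesis is supplied by Condition~\ref{A:cond1}.

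For the generator convergence, fix $f\in C_c^\infty(\reels^d)$ and $r>0$, and split
\begin{equation*}
A_nf(t,x)=n\int_{|y-x|\le1}(f(y)-f(x))K_n(t,x,dy)+n\int_{|y-x|>1}(f(y)-f(x))K_n(t,x,dy).
\end{equation*}
Since $f$ is bounded, the second (``far'') integral is bounded in absolute value by $2\|f\|_\infty\,nK_n(t,x,\{|y-x|\ge1\})$, which tends to $0$ uniformly over $\{|(t,x)|\le r\}$ by~\eqref{eq:chaine de Markov ne saute pas beaucoup} with $\epsilon=1$. In the first (``near'') integral I apply the second-order Taylor expansion of $f$ about $x$: the zeroth-order term cancels, and the linear and quadratic terms integrate exactly against the truncated moments $b_n,a_n$ of~\eqref{eq:definition de bn}--\eqref{eq:definition de an} to produce $\sum_i\partial_{x_i}f(x)\,b_{n,i}(t,x)+\tfrac12\sum_{i,j}\partial_{x_i}\partial_{x_j}f(x)\,a_{n,ij}(t,x)$. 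Because $\partial f$ and $\partial^2 f$ are bounded and $b_n\to b$, $a_n\to a$ locally uniformly by~\eqref{eq:an et bn sont pres de a et b}, these two terms converge to $Af(t,x)$ uniformly over $\{|(t,x)|\le r\}$.

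The main obstacle is the Taylor remainder $R(x,y)=f(y)-f(x)-\nabla f(x)\cdot(y-x)-\tfrac12(y-x)^\top D^2f(x)(y-x)$, which I control by splitting the near region at a threshold $\delta\in(0,1)$. On $\{|y-x|\le\delta\}$ the integral form of the remainder gives $|R(x,y)|\le\tfrac12\,\omega(\delta)\,|y-x|^2$, where $\omega$ is the modulus of continuity of $D^2f$ (which tends to $0$ as $\delta\to0$ since $f\in C_c^\infty$); hence $n\int_{|y-x|\le\delta}|R|\,K_n\le\tfrac12\omega(\delta)\,\mathrm{tr}\,a_n(t,x)\le\tfrac12\omega(\delta)M_r$, where $\mathrm{tr}\,a_n(t,x)\le M_r<\infty$ uniformly on $\{|(t,x)|\le r\}$ for all large $n$ (since $a_n\to a$ and $a$ is continuous). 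On $\{\delta<|y-x|\le1\}$ one has the crude bound $|R(x,y)|\le 2\|f\|_\infty+\|\nabla f\|_\infty+\tfrac12\|D^2f\|_\infty=:C_f$, so $n\int_{\delta<|y-x|\le1}|R|\,K_n\le C_f\,nK_n(t,x,\{|y-x|\ge\delta\})\to0$ uniformly on $\{|(t,x)|\le r\}$ by~\eqref{eq:chaine de Markov ne saute pas beaucoup} with $\epsilon=\delta$. Collecting the pieces gives $\limsup_n\sup_{|(t,x)|\le r}|A_nf-Af|\le\tfrac12\omega(\delta)M_r$ for every $\delta$, and letting $\delta\to0$ yields the claimed convergence.

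For the second stage, the locally uniform convergence $A_nf\to Af$ places us within the scope of the general limit theorems for Markov chains approximating diffusions (\cite{ethierkurtz1986}, Chapter~4; \cite{stroockvaradhan1979}, Chapter~11). Condition~\ref{A:cond1} guarantees existence and uniqueness of the weak solution to~\eqref{E:mainsde}, equivalently well-posedness of the martingale problem for $(A,x_0)$. The no-large-jump bound~\eqref{eq:chaine de Markov ne saute pas beaucoup}, together with the local boundedness of $b_n$ and $a_n$ from~\eqref{eq:an et bn sont pres de a et b}, furnishes the compact-containment and oscillation estimates that make $\{X_n\}_{n\ge n_0}$ relatively compact and force every weak limit to have continuous paths; the generator convergence just established identifies each such limit as a solution of the martingale problem for $(A,x_0)$, by passage to the limit in the discrete martingales~\eqref{eq:martingale temps discret}; and uniqueness of that solution then pins down the whole sequence. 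The subtle point here is precisely this tightness-plus-continuity argument; the time-dependence of the coefficients is handled in the customary way, by applying the autonomous theory to the space-time process $(t,X_n(t))$, whose generator is $\partial_t+A$.
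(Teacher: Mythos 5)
Your proposal is correct, and it is in fact \emph{more} self-contained than the paper's own treatment, which disposes of this proposition entirely by citation: the authors simply state that it corresponds to the results of Section~11.2 of \cite{stroockvaradhan1979} restricted to $E$ (see also Corollary~7.4.2 in conjunction with Proposition~3.10.4 of \cite{ethierkurtz1986}), and that the time-dependence of $a$ and $b$ is handled by appending time to the state variable --- exactly your space-time device, which the paper implements explicitly by setting $\check{X}(t)=(X(t)^\top,t)^\top$ with augmented coefficients $\check{b}$ and $\check{\sigma}$ (degenerate in the time direction), rather than by writing the generator as $\partial_t+A$ as you do; the two formulations are equivalent. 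What you add beyond the paper is an explicit proof of the generator-convergence half: the far/near splitting using the jump condition at $\epsilon=1$, the exact matching of the linear and quadratic Taylor terms with the truncated moments $b_n$ and $a_n$ (exact precisely because both are defined over $\{|y-x|\leq1\}$), and the two-scale control of the remainder via the modulus of continuity $\omega(\delta)$ of $D^2f$ on $\{|y-x|\leq\delta\}$ together with the jump bound at $\epsilon=\delta$ on the annulus --- this is precisely the classical argument inside the cited theorems, and each of your estimates checks out (in particular $n\int_{|y-x|\leq\delta}|y-x|^2K_n\leq\mathrm{tr}\,a_n\leq M_r$ for large $n$ is legitimate since $a_n\to a$ uniformly on the compact set and $a$ is continuous). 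For the process-convergence half you, like the paper, defer to the same literature; your remark that local moment bounds plus the no-large-jump condition ``furnish'' tightness is a sketch --- in Stroock and Varadhan the tightness-and-identification step on a non-compact state space involves a localization argument leaning on well-posedness of the limiting martingale problem --- but since you cite the precise chapters where this is carried out, this matches (indeed exceeds) the paper's level of rigor. In short: your route buys the reader a self-contained verification of the generator limit at the cost of length, while the paper buys brevity by outsourcing everything; both rest on the same two pillars, namely well-posedness of the martingale problem for $(A,x_0)$ from Condition~\ref{A:cond1} and the classical Markov-chains-to-diffusions limit theorem applied to the time-augmented process.
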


That proposition corresponds to results in \citep[Section~11.2]{stroockvaradhan1979}, restricted to $E$. See also Corollary~7.4.2, in conjunction with Proposition~3.10.4 in~\cite{ethierkurtz1986}. Note that the functions~$a$ and~$b$ postulated in Section 11.2 in~\cite{stroockvaradhan1979} or Corollary~7.4.2 in~\cite{ethierkurtz1986} are time-homogeneous. To extend these results to the case where~$a$ and~$b$ are time-dependent, one only needs to append the time to the state variable. In other words, consider the $(d+1)$-dimensional state process $\check{X}(t):=(X(t)^\top, t)^\top$. Then $X$ is a solution to the SDE~\eqref{E:mainsde} if and only if $\check{X}$ is a solution to $d\check{X}_t=\check{b}(\check{X}_t)dt+\check{\sigma}(\check{X}_t)d\check{W}(t)$ with initial condition $\check{X}(0)=(x_0^\top,0)^\top$, upon defining
\begin{equation*}
\check{b}\left(\begin{array}{c}x\\t\end{array}\right):=\left(\begin{array}{c}
b(t,x)\\1
\end{array}\right)\quad\text{and}\quad
\check{\sigma}\left(\begin{array}{c}x\\t\end{array}\right):=\left(\begin{array}{cc}
\sigma(t,x) & 0 \\ 0 & 0
\end{array}\right)\,,
\end{equation*}
where $\check{W}$ stands for a $(d+1)$-dimensional standard Brownian motion.

Our goal is now to verify that Proposition~\ref{prop:convergence d'une chaine de Markov vers une diffusion, resultat de Ethier et Kurtz} applies when the transition function~$K_n$ is the one defined at~\eqref{eq:transition function}. From Condition~\ref{A:cond2}, it is clear that $K_n(t,x,\Em)=1$ for all $(t,x)\in\Em$. 
The two following lemmas establish that the three sequences given in~\eqref{eq:an et bn sont pres de a et b} and~\eqref{eq:chaine de Markov ne saute pas beaucoup} tend to zero as $n$ goes to infinity, which is all that is needed to get the desired convergence. To lighten the notation, we shall write $Z:=\varepsilon-\mean$.

\begin{lem}\label{lemme:resultats techniques sur An}
Assume Conditions~\ref{A:cond0}, \ref{A:cond1} and~\ref{A:cond2}. For $t\geq0$, $x\in\Em$ and $\epsilon>0$, define
\begin{equation*}
D_n(t,x,\epsilon)\;:=\;\left\{ \left| \frac1n b(t,x)+\frac{1}{\sn}\tilde{\sigma}(t,x)Z \right|\,>\,\epsilon\right\}\,.
\end{equation*}
Then, for any $r>0$ and $\epsilon>0$, we have, as $n$ goes to infinity, that
\begin{alignat}{3}
&\sup_{\substack{(t,x)\in \tE\\ |(t,x)|\leq r}} \tilde{\mathbb{E}}\left[|Z|^2\, \mathbb{I}_{D_n(t,x,\epsilon)}\right]&&\rightarrow0\,;\label{eq:Z carre un An va a zero}\\
&\sup_{\substack{(t,x)\in \tE\\ |(t,x)|\leq r}} n\,\tilde{\mathbb{P}}(D_n(t,x,\epsilon))&&\rightarrow0\,.\label{eq:n prob de An va a zero}
\end{alignat}
\end{lem}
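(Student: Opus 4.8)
The plan is to reduce both statements to a single, $(t,x)$-uniform tail estimate on $Z$. First I would invoke Condition~\ref{A:cond0}: the continuity of $b$ and $\tilde{\sigma}$ makes them bounded on the compact set $\{(t,x)\in\tE:|(t,x)|\leq r\}$, so I may set $C_b:=\sup|b(t,x)|$ and $C_\sigma:=\sup\|\tilde{\sigma}(t,x)\|$ over this set, where $\|\cdot\|$ is the operator norm subordinate to the Euclidean norm. On $D_n(t,x,\epsilon)$ the triangle inequality then gives
$$
\epsilon<\left|\frac1n b(t,x)+\frac1{\sn}\tilde{\sigma}(t,x)Z\right|\leq\frac{C_b}{n}+\frac{C_\sigma}{\sn}\,|Z|.
$$
As soon as $n$ is large enough that $C_b/n<\epsilon/2$, this forces $|Z|>\delta_n$, with $\delta_n:=\epsilon\sn/(2C_\sigma)\to\infty$. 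Since $\delta_n$ is free of $(t,x)$, we obtain the uniform inclusion $D_n(t,x,\epsilon)\subseteq\{|Z|>\delta_n\}$ over the whole compact set.

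Granting this inclusion, the bound~\eqref{eq:Z carre un An va a zero} is immediate:
$$
\sup_{\substack{(t,x)\in\tE\\|(t,x)|\leq r}}\tilde{\esp}\left[|Z|^2\,\mathbb{I}_{D_n(t,x,\epsilon)}\right]\leq\tilde{\esp}\left[|Z|^2\,\mathbb{I}_{\{|Z|>\delta_n\}}\right].
$$
By Condition~\ref{A:cond2}, $\varepsilon$ has covariance matrix $\Sigma$, so $Z=\varepsilon-\mean$ satisfies $\tilde{\esp}[|Z|^2]=\operatorname{tr}(\Sigma)<\infty$. The right-hand side is the tail truncation of the integrable variable $|Z|^2$, hence tends to $0$ as $\delta_n\to\infty$ by dominated convergence.

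For~\eqref{eq:n prob de An va a zero} I would combine the same inclusion with a truncated Chebyshev inequality. On $\{|Z|>\delta_n\}$ we have $1\leq|Z|^2/\delta_n^2$, whence
$$
n\,\tilde{\mathbb{P}}(D_n(t,x,\epsilon))\leq n\,\tilde{\mathbb{P}}(|Z|>\delta_n)\leq\frac{n}{\delta_n^2}\,\tilde{\esp}\left[|Z|^2\,\mathbb{I}_{\{|Z|>\delta_n\}}\right].
$$
The crucial feature is that $n/\delta_n^2=4C_\sigma^2/\epsilon^2$ is constant in $n$, so the right-hand side is a fixed multiple of the truncated moment shown to vanish in the previous step, yielding the claim uniformly in $(t,x)$.

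The one delicate point---and the step I expect to be the main obstacle---is precisely this last estimate. A naive application of Chebyshev with the full second moment $\tilde{\esp}[|Z|^2]$ would only keep $n\,\tilde{\mathbb{P}}(D_n)$ bounded, not convergent to zero, because the spare factor $n/\delta_n^2$ is exactly of order one. Keeping the indicator inside the expectation is what supplies the extra decay: the order-one factor then multiplies an already-vanishing truncated moment rather than a constant.
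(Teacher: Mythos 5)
Your proof is correct and follows essentially the same route as the paper's: the uniform inclusion $D_n(t,x,\epsilon)\subseteq\{|Z|>\delta_n\}$ with $\delta_n=\epsilon\sqrt{n}/(2C_\sigma)$ is exactly the paper's inclusion $D_n(t,x,\epsilon)\subset\{|Z|>\sqrt{n}\,\gamma\}$ made explicit (with $\gamma=\epsilon/(2C_\sigma)$), and both claims are then finished by the same truncated second-moment/Chebyshev estimate and dominated convergence. The ``delicate point'' you flag---keeping the indicator inside the expectation so that the order-one factor $n/\delta_n^2$ multiplies a vanishing truncated moment---is precisely the paper's chain $n\,\tilde{\mathbb{P}}(D_n)\leq\gamma^{-2}\,\tilde{\mathbb{E}}\bigl[|Z|^2\,\mathbb{I}_{\{|Z|>\sqrt{n}\gamma\}}\bigr]$.
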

\begin{proof}
For each $r>0$, define
\begin{equation*}
\Mb(r)\;:=\;\max_{\substack{(t,x)\in \tE\\ |(t,x)|\leq r}}|b(t,x)|\quad\text{and}\quad \Msigma(r)\;:=\;\max_{\substack{(t,x)\in \tE\\ |(t,x)|\leq r}}|\tilde{\sigma}(t,x)|\,.
\end{equation*}
Since $b$ and $\tilde{\sigma}$ are assumed to be continuous, we have $0\leq\Mb(r),\Msigma(r)<\infty$ for every $r>0$.
Fix $r>0$ and $\epsilon>0$. For $(t,x)\in \tE$ satisfying $|(t,x)|\leq r$, note that $|\frac1n b(t,x)+\frac{1}{\sn} \tilde{\sigma}(t,x)Z|\leq\frac1n\Mb(r)+\frac{1}{\sn}\Msigma(r)|Z|$, from which one may establish existence of $\gamma>0$ and $N_0>0$ (depending on $r$ and $\epsilon$) such that \begin{equation}\label{pr:An dans un ensemble ne dependant pas de x}
D_n(t,x,\epsilon)\subset\{|Z|>\sqrt{n}\,\gamma\}\,,\quad \text{for all }n\geq N_0\text{ and }|(t,x)|\leq r\,.
\end{equation}
If $|(t,x)|\leq r$ and $n\geq N_0$, we then have
$$
\tilde{\mathbb{E}}\left[|Z|^2\,\mathbb{I}_{D_n(t,x,\epsilon)}\right]\leq\tilde{\mathbb{E}}\left[|Z|^2\,
\mathbb{I}_{\{|Z|>\sqrt{n}\gamma\}}\right] ,
$$
and since the right-hand side does not depend on $(t,x)$, \eqref{eq:Z carre un An va a zero} easily follows from the dominated convergence theorem, and the fact that the components of $Z$ have finite second moments. The inclusion in~\eqref{pr:An dans un ensemble ne dependant pas de x} also yields the first inequality in
\begin{equation*}
n\,\tilde{\mathbb{P}}(D_n(t,x,\epsilon))\;\leq\;
\frac1{\gamma^2}(\sqrt{n}\gamma)^2\tilde{\mathbb{E}}\left[\mathbb{I}_{\{|Z|>\sqrt{n}\gamma\}}\right]\;\leq\;
\frac1{\gamma^2}\tilde{\mathbb{E}}\left[|Z|^2\,\mathbb{I}_{\{|Z|>\sqrt{n}\gamma\}}\right]\,,
\end{equation*}
for $|(t,x)|\leq r$ and $n\geq N_0$, and~\eqref{eq:n prob de An va a zero} is also immediate from the dominated convergence theorem.
\end{proof}

Note that~\eqref{eq:n prob de An va a zero} is equivalent to saying that, with $K_n$ defined in~\eqref{eq:transition function}, the supremum in equation~\eqref{eq:chaine de Markov ne saute pas beaucoup} tends to zero as $n$ goes to infinity, for all $\epsilon>0$ and $r>0$.

\begin{lem}\label{lemme:an et bn tout pres de a et b}
Assume Conditions~\ref{A:cond0}, \ref{A:cond1} and~\ref{A:cond2}. For the transition function $K_n$ defined in~\eqref{eq:transition function}, and $a_n$ and $b_n$ defined in~\eqref{eq:definition de an} and~\eqref{eq:definition de bn} respectively, the two sequences in~\eqref{eq:an et bn sont pres de a et b} tend to zero as $n$ goes to infinity, for all $r>0$.
\end{lem}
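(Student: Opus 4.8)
The plan is to write both $b_n$ and $a_n$ as expectations with respect to the single increment $\Delta:=\frac1n b(t,x)+\frac1{\sn}\tilde{\sigma}(t,x)Z$, where $Z=\varepsilon-\mean$. Indeed, under $K_n$ defined in~\eqref{eq:transition function} the displacement $y-x$ has exactly the law of $\Delta$, so that from~\eqref{eq:definition de bn} and~\eqref{eq:definition de an} we have $b_n(t,x)=n\,\tilde{\esp}[\Delta\,\mathbb{I}_{\{|\Delta|\leq1\}}]$ and $a_n(t,x)=n\,\tilde{\esp}[\Delta\Delta^\top\mathbb{I}_{\{|\Delta|\leq1\}}]$.

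First I would compute the \emph{untruncated} moments exactly. Since $\tilde{\esp}[Z]=0$ and $\tilde{\esp}[ZZ^\top]=\Sigma$, a direct expansion gives $n\,\tilde{\esp}[\Delta]=b(t,x)$ and $n\,\tilde{\esp}[\Delta\Delta^\top]=a(t,x)+\frac1n b(t,x)b(t,x)^\top$, where the last identity uses Condition~\ref{A:cond2} in the form $a=\tilde{\sigma}\Sigma\tilde{\sigma}^\top$. Consequently $b_n(t,x)-b(t,x)=-n\,\tilde{\esp}[\Delta\,\mathbb{I}_{\{|\Delta|>1\}}]$ and $a_n(t,x)-a(t,x)=\frac1n b(t,x)b(t,x)^\top-n\,\tilde{\esp}[\Delta\Delta^\top\mathbb{I}_{\{|\Delta|>1\}}]$, reducing everything to controlling truncation errors over $\{|\Delta|>1\}$.

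Next I would bound both truncation errors by $n\,\tilde{\esp}[|\Delta|^2\mathbb{I}_{\{|\Delta|>1\}}]$: for the matrix term this is immediate from $|\Delta\Delta^\top|\leq|\Delta|^2$, and for the vector term I would use $|\Delta|\leq|\Delta|^2$ on $\{|\Delta|>1\}$. Observing that $\{|\Delta|>1\}=D_n(t,x,1)$ in the notation of Lemma~\ref{lemme:resultats techniques sur An}, I expand $|\Delta|^2\leq\frac2{n^2}|b(t,x)|^2+\frac2n|\tilde{\sigma}(t,x)Z|^2\leq\frac2{n^2}\Mb(r)^2+\frac2n\Msigma(r)^2|Z|^2$ for $|(t,x)|\leq r$. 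Multiplying by $n$, taking expectation and then the supremum over $|(t,x)|\leq r$ yields the bound $\frac2n\Mb(r)^2+2\Msigma(r)^2\sup_{|(t,x)|\leq r}\tilde{\esp}[|Z|^2\mathbb{I}_{D_n(t,x,1)}]$, whose second term tends to zero by~\eqref{eq:Z carre un An va a zero} with $\epsilon=1$ and whose first term vanishes trivially. Together with the remainder $\frac1n b b^\top$, bounded by $\frac1n\Mb(r)^2$ uniformly, this shows both suprema in~\eqref{eq:an et bn sont pres de a et b} tend to zero.

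The argument is essentially bookkeeping once the exact moment identities $n\,\tilde{\esp}[\Delta]=b$ and $n\,\tilde{\esp}[\Delta\Delta^\top]=a+\frac1n b b^\top$ are established. The only genuine point is uniformity in $(t,x)$, and the main obstacle there, namely controlling the truncation error uniformly, is already dispatched by Lemma~\ref{lemme:resultats techniques sur An}; the small crux is recognizing that $|\Delta|\leq|\Delta|^2$ on $\{|\Delta|>1\}$, so that the linear truncation error for $b_n$ reduces to the quadratic one handled by~\eqref{eq:Z carre un An va a zero}.
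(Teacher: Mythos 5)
Your proof is correct. It rests on the same technical input as the paper's (the uniform estimates of Lemma~\ref{lemme:resultats techniques sur An}, together with the local bounds $\Mb(r)$ and $\Msigma(r)$ furnished by continuity of $b$ and $\tilde{\sigma}$), but the decomposition is genuinely different. The paper expands $b_n$ and $a_n$ directly as expectations restricted to $\overline{D_n(t,x,1)}$ and estimates the resulting terms one by one: for $b_n$ it controls $\sqrt{n}\,\tilde{\sigma}(t,x)\,\tilde{\esp}\bigl[Z\,\mathbb{I}_{D_n(t,x,1)}\bigr]$ via the Cauchy--Schwarz inequality combined with the probability estimate~\eqref{eq:n prob de An va a zero}, while for $a_n$ it retains the cross terms in $b\,\tilde{\esp}[Z^\top\mathbb{I}_{\,\overline{D_n(t,x,1)}}]\,\tilde{\sigma}^\top$, bounded by $\frac{2}{\sqrt{n}}\Mb(r)\Msigma(r)\,\tilde{\esp}|Z|$, and invokes~\eqref{eq:Z carre un An va a zero} only for the quadratic term. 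You instead compute the untruncated moments exactly, $n\,\tilde{\esp}[\Delta]=b(t,x)$ and $n\,\tilde{\esp}[\Delta\Delta^\top]=a(t,x)+\frac1n b(t,x)b(t,x)^\top$, so that both differences collapse to the single truncation error $n\,\tilde{\esp}\bigl[|\Delta|^2\,\mathbb{I}_{D_n(t,x,1)}\bigr]$ plus a deterministic $O(1/n)$ remainder; the reduction of the drift error to the quadratic one hinges on the pointwise inequality $|\Delta|\leq|\Delta|^2$ on $\{|\Delta|>1\}$, which is the small but genuine idea of your write-up, and all subsequent steps (the identification $\{|\Delta|>1\}=D_n(t,x,1)$, the elementary bound $|\Delta|^2\leq\frac{2}{n^2}\Mb(r)^2+\frac{2}{n}\Msigma(r)^2|Z|^2$, and the application of~\eqref{eq:Z carre un An va a zero} with $\epsilon=1$) check out and hold uniformly on $\{|(t,x)|\leq r\}$. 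What your route buys is economy: no Cauchy--Schwarz, no cross terms, and only the first conclusion~\eqref{eq:Z carre un An va a zero} of Lemma~\ref{lemme:resultats techniques sur An} is consumed, whereas the paper's proof of this lemma uses both \eqref{eq:Z carre un An va a zero} and \eqref{eq:n prob de An va a zero} (the latter remains indispensable elsewhere, namely for verifying condition~\eqref{eq:chaine de Markov ne saute pas beaucoup}). What the paper's term-by-term expansion buys in exchange is explicit visibility of which moment of $Z$ each error term consumes. Both arguments are complete and deliver the same uniform convergence in~\eqref{eq:an et bn sont pres de a et b}.
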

\begin{proof}
From~\eqref{eq:definition de bn} and $\tilde{\mathbb{E}}[Z]=0$ (recall Condition~\ref{A:cond2}), we can write
%
\begin{align*}
b_n(t,x)\;&=\;n\tilde{\mathbb{E}}\left[\left(\frac1n b(t,x)+\frac{1}{\sn}\tilde{\sigma}(t,x)Z\right)\mathbb{I}_{\,\overline{D_n(t,x,1)}}\right]\\
&=\;b(t,x)-b(t,x)\tilde{\mathbb{P}}(D_n(t,x,1))-\sqrt{n}\tilde{\sigma}(t,x)\tilde{\mathbb{E}}\left[Z\mathbb{I}_{D_n(t,x,1)}\right]\,.
\end{align*}
Here, $D_n$ is as in Lemma~\ref{lemme:resultats techniques sur An}.
Using the Cauchy-Schwarz inequality, and with $\Mb$ and $\Msigma$ as in the proof of Lemma~\ref{lemme:resultats techniques sur An}, we then see that, for $|(t,x)|\leq r$,
\begin{equation*}
|b_n(t,x)-b(t,x)|\leq\Mb(r)\tilde{\mathbb{P}}(D_n(t,x,1))+\Msigma(r)\left(\tilde{\mathbb{E}}|Z|^2\right)^\frac12
\,\left(n\tilde{\mathbb{P}}(D_n(t,x,1))\right)^\frac12\,.
\end{equation*}
The result follows by~\eqref{eq:n prob de An va a zero} in Lemma~\ref{lemme:resultats techniques sur An}. Similarly, note that
\begin{equation*}
a_n(t,x)=n\tilde{\mathbb{E}}\left[\left(\frac1n b(t,x)+\frac{1}{\sn}\tilde{\sigma}(t,x)Z\right)
\left(\frac1n b(t,x)+\frac{1}{\sn}\tilde{\sigma}(t,x)Z\right)^\top\mathbb{I}_{\,\overline{D_n(t,x,1)}}\right]\,.
\end{equation*}
Rearranging, and using the fact that $\tilde{\mathbb{E}}[ZZ^\top]=\Sigma$ and $a=\tilde{\sigma}\Sigma\tilde{\sigma}^\top$ (see Condition~\ref{A:cond2}), we get
\begin{align*}
a_n(t,x) & -a(t,x)\\
&= \frac1n b(t,x)b(t,x)^\top\tilde{\mathbb{P}}(\overline{D_n(t,x,1)}) - \tilde{\sigma}(t,x)\tilde{\mathbb{E}}[ZZ^\top\mathbb{I}_{\,D_n(t,x,1)}]\tilde{\sigma}(t,x)^\top\\
&+ \frac{1}{\sn} b(t,x)\tilde{\mathbb{E}}[Z^\top\mathbb{I}_{\,\overline{D_n(t,x,1)}}]\tilde{\sigma}(t,x)^\top + \frac{1}{\sn} \tilde{\sigma}(t,x)\tilde{\mathbb{E}}[Z\mathbb{I}_{\,\overline{D_n(t,x,1)}}]b(t,x)^\top.
\end{align*}
If $|(t,x)|\leq r$, we then have
$$
|a_n(t,x)-a(t,x)|\leq\frac1n\Mb(r)^2+\frac2{\sqrt{n}}\Mb(r)\Msigma(r)\tilde{\mathbb{E}}|Z|
+\Msigma(r)^2\tilde{\mathbb{E}}[|Z|^2\mathbb{I}_{D_n(t,x,1)}] ,
$$
and the result follows from~\eqref{eq:Z carre un An va a zero} in Lemma~\ref{lemme:resultats techniques sur An}.
\end{proof}

%
%
\bibliography{lrr}
\bibliographystyle{apalike}

\end{document}